\newtheorem{thm}{Theorem}
\newtheorem{lemma}{Lemma}
\theoremstyle{definition}
\begin{document}

	\title{ Dynamic Agile Reconfigurable Intelligent Surface Antenna (DARISA) MIMO: DoF Analysis and Effective DoF Optimization
		\thanks{Jiale Bai and Hui-Ming Wang are with the School of Information and Communications Engineering, Xi'an Jiaotong University,
			Xi'an 710049, China (e-mail: bjl19970954@stu.xjtu.edu.cn; xjbswhm@gmail.com).
		}
		\thanks{ Liang Jin is with the PLA Strategic Support Force Information Engineering University, Zhengzhou 450001, China
			(e-mail: liangjin@263.net).
		}
		\author{Jiale Bai, Hui-Ming Wang, \emph{Senior Member, IEEE}, and Liang Jin
		}
	}
	\maketitle
	\begin{abstract}
			In this paper, we propose a dynamic agile reconfigurable intelligent surface antenna (DARISA) array  integrated into multi-input multi-output (MIMO) transceivers. Each DARISA comprises a number of metasurface elements activated simultaneously via a parallel feed network. The proposed system enables rapid and intelligent phase response adjustments for each metasurface element within a single symbol duration, facilitating a dynamic agile adjustment of phase response (DAAPR) strategy. By analyzing the theoretical degrees of freedom (DoF) of the DARISA MIMO system under the DAAPR framework, we derive an explicit relationship between DoF and critical system parameters, including agility frequentness (i.e., the number of phase adjustments of metasurface elements during one symbol period), cluster angular spread of  wireless channels, DARISA array size, and the number of transmit/receive DARISAs. The DoF result reveals a	significant conclusion: when the number of receive DARISAs is smaller than that of transmit DARISAs, the DAAPR strategy of the DARISA MIMO enhances the overall system DoF. Furthermore, relying on DoF alone to measure channel capacity is insufficient, so we analyze the effective DoF (EDoF) that reflects the impacts of the DoF and channel matrix singular value distribution on capacity. We show channel capacity monotonically increases with EDoF, and optimize the agile phase responses of metasurface elements by using fractional programming (FP) and semidefinite relaxation (SDR) algorithms to maximize the EDoF. Simulations validate the theoretical DoF gains and reveal that increasing agility frequentness, metasurface element density, and phase quantization accuracy can enhance the EDoF. Additionally, densely deployed elements can compensate for the loss in communication performance caused by lower phase quantization accuracy.
	\end{abstract}	
	
	\begin{IEEEkeywords}
		Reconfigurable intelligent surface, dynamic agile, DoF analysis, effective DoF optimization.
	\end{IEEEkeywords}
	
	\section{Introduction}
	
	With the rapid development of wireless communications, increasing data rates and capacity remains a critical challenge. The massive MIMO is an effective technology for delivering high data traffic. However, its high power consumption, low energy efficiency, and substantial hardware costs exacerbate the overall design complexity of communication systems. In this context, emerging reconfigurable intelligent surface (RIS) technology has been proposed as a promising solution for improving data rates and reducing energy consumption \cite{Wu1}. Specifically, RIS is an innovative artificial electromagnetic meta-surface that integrates numerous controllable elements in a compact space, enabling intelligent control over the electromagnetic characteristics of each element through digital programming.
	Research efforts have focused on deploying RIS as passive reflectors to assist wireless communications \cite{Huang1}-\cite{Pan-20a}. By optimizing the phase shifts of these passive RIS elements, reflected signals can constructively combine with direct channel links at receivers, improving  communication performance such as channel capacity, energy efficiency, and physical layer security \cite{Dong-20}-\cite{Bai-22}. However, passive RIS-aided channels are inherently constrained by a multiplicative fading attenuation, severely limiting received signal power. To address this, active RIS (ARIS) architectures incorporating signal amplification have been proposed \cite{Long-21}-\cite{Lv-23}. Nonetheless, active RIS suffers from high power consumption and hardware costs.
	
	Recent advancements have expanded RIS applications beyond conventional reflection-based architectures. For instance, RIS technology has been embedded directly into antenna structures, termed holographic MIMO surfaces (HMIMOS) \cite{Huang-20}, enabling real-time reconfiguration of both transmitted and received electromagnetic waveforms.

	HMIMOS integrates densely packed metamaterial elements with reconfigurable processing networks, forming continuous antenna apertures that enhance spatial degrees of freedom (DoF) and channel capacity. Various channel models \cite{Pizzo1}-\cite{Sun} have been established, with extensive analysis of their derived spatial DoF. Pizzo et al. \cite{Pizzo1} employed a Fourier plane-wave representation channel model grounded in electromagnetic theory to demonstrate that spatial DoF scaled proportionally with HMIMOS aperture size in isotropic scattering environments. Extending this framework, Pizzo et al. \cite{Pizzo3} revealed that non-isotropic scattering environments constrain achievable DoFs. Yuan et al. \cite{Yuan1} further analyzed inter-element coupling effects, revealing that these coupling effects at receivers often reduced the eigenvalue magnitudes of the channel matrix while increasing the spatial DoF. Furthermore, Wei et al. \cite{Wei} confirmed that minimizing element spacing under fixed array dimensions intensified signal correlation, thereby degrading spectral efficiency in multi-user HMIMOS deployments. An et al. \cite{An} theoretically established that expanding the element count in stacked-HMIMO architectures substantially elevates channel capacity bounds. Collectively, these studies show that HMIMOS achieve substantial spatial DoFs, which are determined by both the HMIMOS size and the scattering environment, thereby necessitating extensive RF chains to fully exploit the available DoFs.
	
	However, existing studies on HMIMOS \cite{Pizzo1}-\cite{Sun} neglect the impact of real-time dynamic phase manipulation capabilities (dynamic agile adjustment) in metasurface elements
	 in the time domain. A critical feature of metasurfaces is their dynamic agile adjustment of phase response, where each element independently reconfigures its phase response at a nanosecond-scale rate. For instance, a demonstrated prototype \cite{Sleasman} employs positive intrinsic-negative (PIN) diodes with 2-ns response times for phase tuning. This rapid response time is significantly shorter than one expected symbol duration, enabling the dynamic agile \emph{multiple adjustments} of phase responses of all metamaterial elements within one symbol transmission period. Consequently, this dynamic agile approach increases the number of available samples from multiple observations of one symbol across joint space-time domains without expanding the signal bandwidth, which may potentially further enhance the utilization of DoF and improve communication performance.
	
	On the other hand, a fundamental principle of MIMO communications states that channel capacity is jointly determined by both spatial DoF (the rank of the channel matrix) and the distribution of its singular values \cite{Tse}. However, existing studies \cite{Pizzo1}-\cite{Sun}, which focus on channel DoF perspective, have overlooked the impact of singular value distributions on capacity. Consequently, relying solely on DoF to measure channel capacity is not sufficient. To address this limitation, the concept of ``effective degrees of freedom (EDoF)" was developed in \cite{Shiu}-\cite{Yuan2} to provide a more accurate reflection of channel capacity. Specifically, the EDoF of a MIMO system indicates the equivalent quantity and quality of its independent SISO system, providing a convenient single metric for estimating the performance of a MIMO system. Unlike conventional DoF, the EDoF is directly related to the slope of spectral efficiency (channel capacity at a single frequency) \cite{Verdu}, thus making it more directly related to channel capacity. However, so far the EDoF of an HMIMOS system with dynamic adjustment capability has not been investigated.
	
	To address these gaps, we develop a novel dynamic agile reconfigurable intelligent surface antenna (DARISA) MIMO transmission system, where transceivers employ multiple DARISAs.
	  The architecture of a DARISA is depicted in Fig. 1, where each DARISA consists of phase control circuits, numerous metamaterial elements, a power distribution network, and feed ports. The phase control circuits of each metamaterial element realize phase adjustment by adjusting the control voltage to a varactor diode, which is controlled by a DARISA controller. All metamaterial elements are activated simultaneously and parallelly by a coaxial feed network. The RF power is distributed to each element uniformly by a power distribution network. This architecture enhances inter-port isolation while maintaining low implementation complexity.

The proposed DARISA architecture is different from existing technologies such as conventional MIMO relay \cite{Wang-10}\cite{Bai-18}, HMIMOS \cite{Pizzo1}-\cite{Sun} and dynamic metasurface antenna (DMA) \cite{Hunt}-\cite{Shlezinger}. Conventional relays employ independent RF chains per antenna as distinct nodes to extend coverage, necessitating full digital receive-process-retransmit chains and passively forwarding signals through two-hop links. In contrast, DARISA functions as a digital-analog hybrid transceiver, utilizing minimal digital RF chains to govern large-scale analog metasurface elements. This architecture establishes direct communication links and actively optimizes channel matrices, operating fundamentally as a channel-optimizing transceiver rather than a signal repeater. Conventional HMIMOS relies on densely packed antenna arrays and necessitates extensive RF chains to maximize spatial DoF, resulting in extremely high hardware complexity and power consumption. Furthermore, conventional HMIMOS overlooked the ability of dynamic agile adjustment of metasurfaces elements in the time domain. The DMA employs microstrip waveguides to sequentially activate elements via guided signal propagation, whereas DARISA leverages a parallel coaxial feed network to activate all metasurface elements simultaneously. Moreover, while the term ``dynamic" in DMA denotes space-domain phase adjustments, DARISA achieves dynamic agile phase reconfiguration in the time domain.

	\begin{figure}[!t]
		\centering
		{\includegraphics[width=3.5 in]{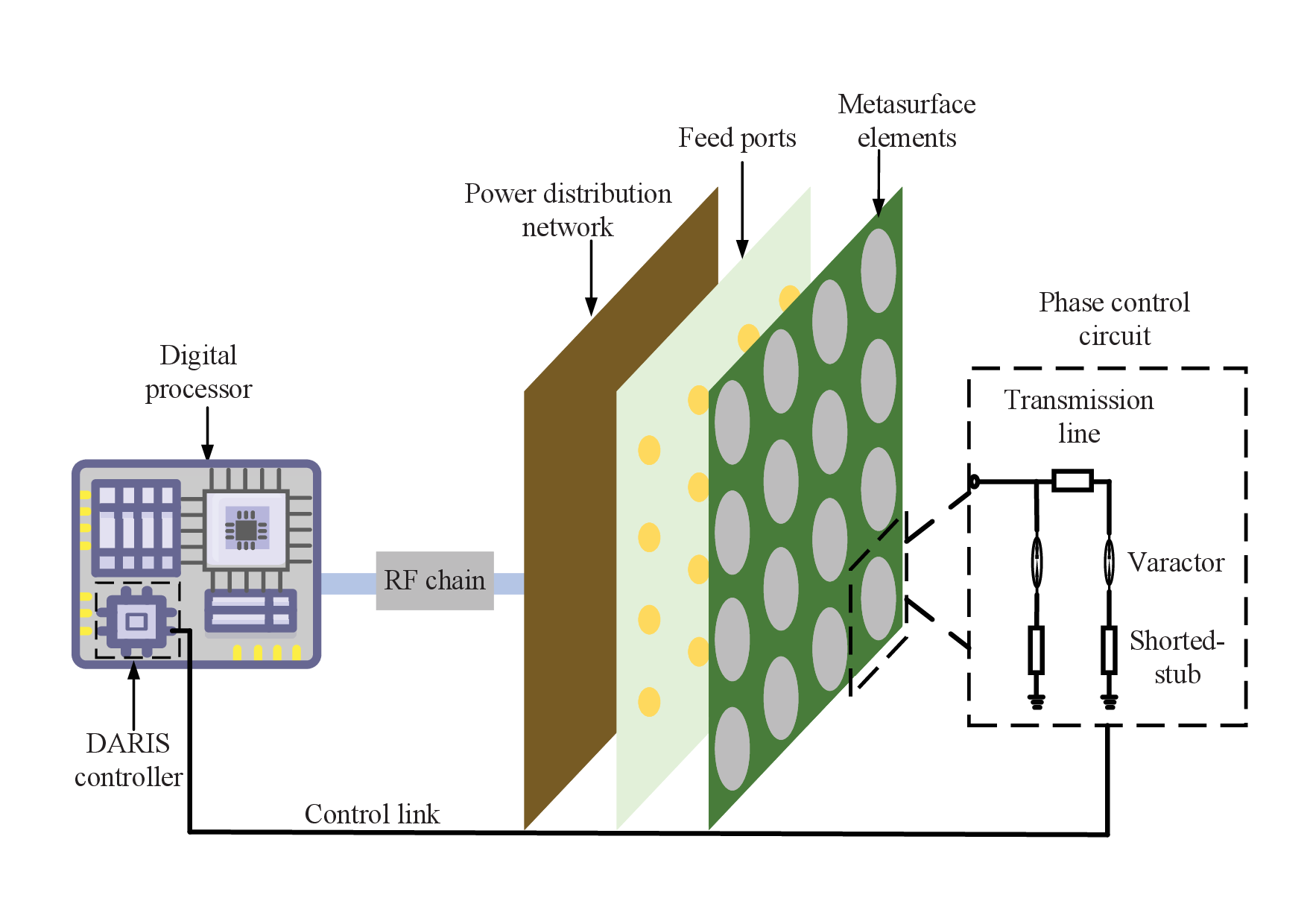}}
		\caption{Architecture of a DARISA.}
	\end{figure}
	The most significant feature of the DARISA lies in the term ``agile'', which utilizes the nanoseconds order of phase adjustment capability of the metesurface. This work investigates whether this agile capability could impact the DoF when both transmitter and receiver are equipped with DARISA arrays. With a spatial multi-clusters channel model \cite{Poon}, which effectively characterizes the impact of spatial features such as angular spread on channels, a dynamic agile adjustment of phase response (DAAPR) strategy is proposed. By dynamically reconfiguring metasurface phase responses within a single symbol duration without bandwidth expansion, we construct an equivalent extended spatial-temporal channel model that facilitates multi-dimensional signal observations from joint time and space domains, allowing for full utilization of the potential DoF. Furthermore, we investigate the EDoF metric for the DARISA MIMO system and optimize the agile phase responses of elements to maximize EDoF.
		
		The main contributions and innovations of this work are summarized as follows:
		
		1) We leverage the real-time controllability of metasurfaces to develop ``DARISA MIMO" at the transceiver. The communication potential of DARISA MIMO is exploited in the joint space-time domain by constructing an equivalent extended spatial-temporal channel through the dynamic agile phase responses adjustment of metasurface elements.  Additionally, since each DARISA requires only a single RF chain, the proposed DAAPR strategy significantly reduces hardware costs and power consumption, which are high in HMIMOS systems due to the need for numerous RF chains.
		
		2) We analyze the DoF of the DARISA MIMO system in both isotropic and non-isotropic scattering environments. An explicit relationship between DoF and critical system parameters, including agility frequentness, cluster angular spread, transceiver DARISA array sizes, and the number of transceiver DARISAs, is derived. Our findings indicate that the DoF can be improved by appropriately designing the agile adjustment of the phase response of metasurface elements when the number of receive-DARISA is less than that of transmit-DARISA.
		
		3)  To more accurately measure the channel capacity, we analyze the effective DoF (EDoF) that reflects the impacts of the DoF and channel matrix singular value distribution on capacity. We show that the channel capacity monotonically increases with EDoF, and we optimize the phase responses of metasurface elements to maximize the EDoF under the DAAPR strategy. To solve this optimization problem, fractional programming and semidefinite relaxation (SDR) algorithms are proposed. Simulation results show that dynamic agile adjustment of phase responses improves the EDoF, and deploying a denser number of metasurface elements in a fixed-size array further enhances the EDoF. Additionally, we demonstrate that densely deployed elements can compensate for the loss in communication performance caused by lower phase quantization accuracy.
	
	The remainder of this paper is organized as follows. We introduce the system model and channel model in Section II. In section III, we analyze the DoF of DARISA MIMO. The EDoF is optimized in section IV. Section V shows the simulation results to evaluate the performance of the proposed algorithms. Finally, we conclude the paper in section VI.
	
	\emph{Notations}: 	 For matrix ${\bf A}$, ${\bf A}^{ T}$, ${\bf A}^{ H}$ and ${\bf A}^{*}$ represent the transpose, conjugate transpose and conjugate operations, respectively; ${Tr}({\bf A})$ denotes the trace operation; ${rank}({\bf A})$ is the rank of ${\bf A}$; $\odot$ represents the Hardmard product. For vector ${\bf a}$, $\|{\bf a}\|$ represents the norm operation; $\arg({\bf a})$ denotes taking the phase of ${\bf a}$; ${diag}({\bf a})$ is to transform ${\bf a}$ as a diagonal matrix with diagonal elements in ${\bf a}$; $\mathbb{E}\left \{ \cdot  \right \}$ is statistical expectation; $\jmath=\sqrt{-1}$ is an imaginary unit.

	\section{DARISA System and Channel Model}
	
		In this section, we first describe the DARISA  architecture in detail. Then, a spatial channel model for the DARISA SISO system is introduced. Next, we extend this SISO spatial channel model to a DARISA MIMO system and subsequently present a spatial-temporal channel model.

	\subsection{ DARISA  Architecture }
	The architecture of a DARISA is depicted in Fig. 1, where each DARISA consists of phase control circuits, numerous metamaterial elements, a power distribution network, and feed ports. The phase control circuit of each metamaterial element is composed of a variode and the control biasing  circuit, which can independently adjust the control voltage of the variable diode to realize the phase adjustment of elements. The RF power is distributed to each element uniformly by the power distribution network. All metamaterial elements are activated simultaneously and connected to the RF chain and sequentially to the digital processor through a parallel feed network. These four parts are tightly connected, which ensures a compact antenna architecture while facilitating the easy expansion of the size of the antenna.
	
	\begin{figure}[!t]
		\centerline{\includegraphics[width=3.8 in]{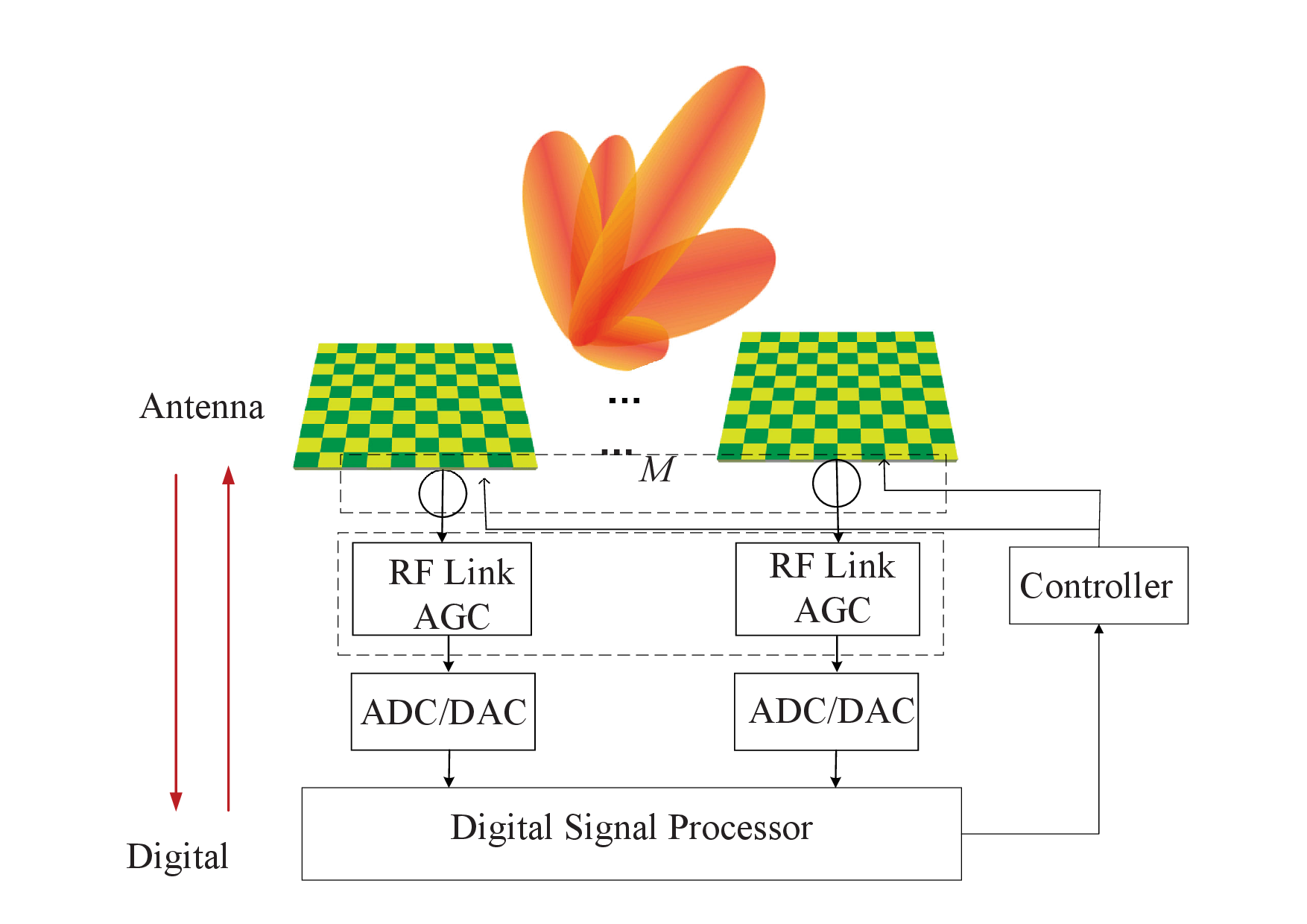}}
		\caption{The architecture of the DARISA transceiver system.}
	\end{figure}
	
	The working principle of this DARISA system is shown in Fig. 2. In receive mode, all metasurface elements capture incoming signals, which are routed to the digital signal processor (DSP) via the RF links and analog-to-digital converter (ADC). The DSP performs channel estimation and signal demodulation, then optimizes the phase configurations of the metasurface elements through the DARISA controller to generate adaptive receive beam patterns. Within each symbol duration, the DSP iteratively generates phase control signals to dynamically adjust metasurface element phases, enabling  different receive beamformers. All received signals are transmitted to the DSP in real time via the RF link, where they undergo joint processing to improve communication performance.
	
In transmit mode, the DSP encodes and digitally precodes the signal, which is then converted by the digital-to-analog converter (DAC) and transmitted via RF links to metasurface elements. The DSP employs a similar dynamic adjustment mechanism to reconfigure metasurface phases in real-time, enabling rapid multi-beamforming for signal transmission.
	
	\subsection{DARISA SISO Spatial Channel Model }
Both the transmitter and receiver equip DARISA consisting of $N_g=N_{g,x}N_{g,y},g\in\left\{r,t\right\}$ metasurface elements, where $N_{g,x}$ and $N_{g,y}$ represent the number of elements per row and per column, respectively. The normalized spacing of adjacent elements is denoted as $\Delta_g$, which is less than or equal to half a wavelength. Thus the horizontal and vertical normalized length of the transmitter and receiver antenna are $D_{g,x}=\Delta_gN_{g,x}$ and $D_{g,y}=\Delta_gN_{g,y},g\in\left\{r,t\right\}$.  The indices of the transmit and receive elements are denoted by $j\in \mathcal{J} \triangleq \left\{1,\cdots,{N}_{t}\right\}$ and $i\in \mathcal{I} \triangleq\left\{1,\cdots,{N}_{r}\right\}$. The normalized location of the $j$-th and $i$-th elements with respect to the origin are expressed as
	\begin{align}
		\notag
		&{\mathbf{t}}_{j}=[t_{x}^{j},t_{y}^{j},t_{z}^{j}]^T={[t_x(j){\Delta }_{t},t_y(j){\Delta }_{t},0]}^{T},j\in \mathcal{J},\\
		\notag
		&{\mathbf{r}}_{i}=[r_{x}^{i},r_{y}^{i},r_{z}^{i}]^T={[r_x(i){\Delta }_{r},r_y(i){\Delta }_{r},0]}^{T},i\in \mathcal{I},
	\end{align}
	where $t_x(j)=\bmod (j-1,{{N}_{t,x}})$ and $r_x(i)=\bmod (i-1,{{N}_{r,x}})$ are horizontal indices of $j$-th transmit element and $i$-th receive element, respectively;  $t_y(j)=\left\lfloor j-1/{{N}_{t,x}} \right\rfloor$ and $r_y(i)=\left\lfloor i-1/{{N}_{r,x}} \right\rfloor$ ar vertical indices of $j$-th transmit element and $i$-th receive element \cite{Wei}, respectively.
	
	We now describe the DARISA SISO spatial channel model. Due to the presence of multiple scatterers in the propagation environment, the wireless channel can be characterized as a typical multipath channel, exemplified by the cluster delay line (CDL) model from the 3GPP protocol TR38.900 \cite{TR}. A flat fading multiple-cluster model which includes $L$ scattering clusters between the transmitter and receiver is considered. Each cluster consists of a superposition of multiple rays that share the same time delay and power within the cluster.

		\begin{figure}[!t]
		\label{SISO-Channel-model}
		\centerline{\includegraphics[width=3.4 in]{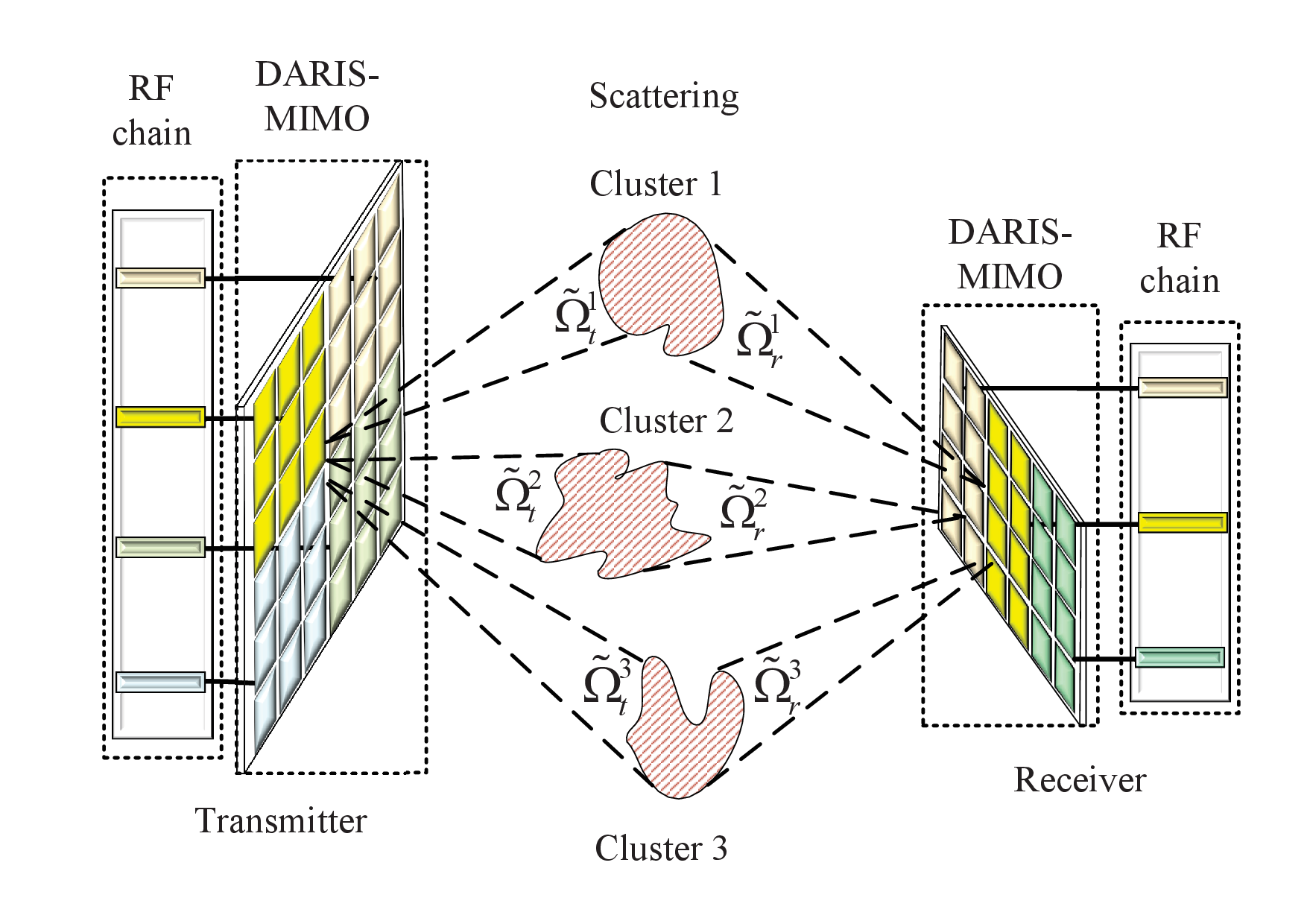}}
		\caption{A DARISA MIMO transmission system.}
	\end{figure}
	
	The center angles of azimuth angle of departure (AoD), zenith angle of departure (ZoD), azimuth angle of arrival (AoA), and zenith angle of arrival (ZoA) in the $l$-th cluster are defined as $\dot{\phi}_{g}^{l}$ and $\dot\theta_{g}^{l}, g\in\left\{r,t\right\},l=1,\cdots,L$, respectively. The angular spreads of AoD, ZoD, AoA, and ZoA in the $l$-th cluster are denoted as $\Delta{\phi}_{g}^{l}$ and $\Delta{\theta}_{g}^{l}$, respectively. Therefore, the angles of AoD($\theta^{l}_{t}$), ZoD($\phi^{l}_{t}$), AoA($\theta^{l}_{r}$), and ZoA($\phi^{l}_{r}$) for any ray in the $l$-th cluster must satisfy the condition
	\begin{align}
		\notag
		&\theta_g^l \in \Omega_{g,\theta}^{l}\triangleq \left[\dot\theta_{g}^{l}- {\Delta{\theta}_{g}^{l}},\dot\theta_{g}^{l}+{\Delta{\theta}_{g}^{l}} \right],\\
		\label{angle-range}
		&\phi_{g}^{l} \in \Omega_{g,\phi}^{l} \triangleq\left[\dot\phi_{g}^{l}- {\Delta{\phi}_{g}^{l}},\dot\phi_{g}^{l}+{\Delta{\phi}_{g}^{l}}\right], g\in\left\{r,t\right\}.
	\end{align}
	where $\Omega_{g,\theta}^{l}$ and $\Omega_{g,\phi}^{l}$ denote the angular ranges for AoD, ZoD, AoA and ZoA in the $l$-th cluster, respectively. To simplify notations, we define the departure and arrival directions along with the angular ranges for the $l$-th cluster, as ${\boldsymbol{\alpha} }_{g}^l\triangleq\left[\theta_g^l,\phi_g^l\right]^T$, and ${{\Omega}}_{g}^{l}\triangleq \Omega_{g,\theta}^{l}\times \Omega_{g,\phi}^{l},g\in\left\{r,t\right\}$, respectively.
	
	Then, the spatial channel response $h_{ij}$ between $i$-th receive and $j$-th transmit element is denoted as \cite{So}
	\begin{align}
		\notag
		h_{ij}=\frac{1}{\sqrt{L}}\sum\limits_{l=1}^{L} \iint_{ {{\Omega}}_{t}^{l} \times {{\Omega}}_{r}^{l} }& q_{r}^{i}{a}_{r}^{i}\left( {\boldsymbol{\alpha} }^l_{r}\right)H_{a}\left( {\boldsymbol{\alpha}}^l_{r}, {\boldsymbol{\alpha} }^l_t\right)\\
		\label{SISO_Continuous}
		& {a}_{t}^{j*}\left({\boldsymbol{\alpha}}^l_t \right)q_{t}^{j}d {\boldsymbol{\alpha}}^l_t d {\boldsymbol{\alpha}}^l_r,
	\end{align}
	where
	\begin{align}
		\notag
		{a}^{i}_{r}\left({\boldsymbol{\alpha}}^l_r\right)&=e^{-\jmath2\pi\left(r_{x}^{i} \cos \phi^l_r\sin \theta^l_r + r_{y}^{i} \sin \phi^l_r\sin \theta^l_r + r_{z}^{i}\cos \theta^l_r\right)},\\
		\notag
		{a}_{t}^{j}\left({\boldsymbol{\alpha} }^l_t\right)&=e ^{-\jmath2\pi\left(t_{x}^{j} \cos \phi^l_t\sin \theta^l_t + t_{y}^{j} \sin \phi^l_t\sin \theta^l_t + t_{z}^{j} \cos \theta^l_t\right)}
	\end{align}
denotes the receive (transmit) response of the $i$-th receive ($j$-th transmit) element to the arrival (departure) direction ${\boldsymbol{\alpha}}^l_r$ (${\boldsymbol{\alpha}}^l_t$) of the $l$-th cluster. The adjustable phase responses of the $j$-th transmit and $i$-th receive elements are given by $q_{t}^{j}=e^{\jmath\varphi_{t}^{j}}$ and $q_{r}^{i}=e^{\jmath\varphi_{r}^{i}}$, respectively. $H_{a}\left({\boldsymbol{\alpha}}^l_r, {\boldsymbol{\alpha} }^l_t\right)$ represents the scattering response of the $l$-th cluster from ${\boldsymbol{\alpha} }^l_t$ to ${\boldsymbol{\alpha}}^l_r$.
	
	Since $h_{ij}$ is bandlimited \cite{Poon}, this channel with continuous angular can be perfectly reconstructed from a discrete angular sequence of samples, provided that the sampling rate satisfies the Nyquist condition \cite{Pizzo1}-\cite{Pizzo3},\cite{Tse}.  When the sampling rate of $h_{ij}$ meets the Nyquist criterion, the number of samples in the departure and arrival directions at the $l$-th cluster are denoted as $d_r^l$ and $d_t^l$, respectively. Since all elements are ultimately connected to a single RF channel in the SISO system, when the transmit and receive antennas consist of ${N}_{t}$ and ${N}_{r}$ elements, the DARISA SISO spatial channel is represented discretely as
	\begin{align}
		\notag
		h& =\sum\limits_{i=1}^{{{N}_{r}}}{\sum\limits_{j=1}^{{{N}_{t}}}}{h_{ij}}\\
		\notag
		&=\frac{1}{\sqrt{L}}\sum\limits_{i=1}^{N_r}\sum\limits_{j=1}^{N_t}\sum\limits_{l=1}^{L} q_{r}^{i}{\bf a}_{r,i}^{lH} {\bf H}_{a}^{l} {\bf a}_{t,j}^{l}q_{t}^{j}\\
		\notag
		&=\frac{1}{\sqrt{L}}\sum\limits_{l=1}^{L} {\bf{q}}_{r}^H\underbrace{{\bf A}_{r}^{lH}{\bf H}_{a}^{l}{\bf A}_{t}^{l}}_{{\bf H}_{w}^{l}}{\bf{q}}_{t}\\
		\label{final_siso_channel}
		&= {\bf{q}}_{r}^H{\bf H}_{w}{\bf{q}}_{t},
	\end{align}
	where $h_{ij}$ in (\ref{SISO_Continuous}) is described by
	\begin{align}
		\notag
		{h_{ij}}=\frac{1}{\sqrt{L}}\sum\limits_{l=1}^{L} \sum\limits_{p_l=1}^{d_r^{l}}\sum\limits_{q_l=1}^{d_t^{l}} q_{r}^{i}{a}_{r}^{i}\left({\boldsymbol{\alpha}}^{p_l}_{r}\right)H_{a}\left({\boldsymbol{\alpha} }^{p_l}_r,{\boldsymbol{\alpha}}^{q_l}_{t}\right)
		{a}_{t}^{j*}\left({\boldsymbol{\alpha}}^{q_l}_{t}\right)q_{t}^{j}.
	\end{align}
	The notations ${a}_{r}^{i}\left({\boldsymbol{\alpha}}^{p_l}_{r}\right)$, ${a}_{t}^{j}\left({\boldsymbol{\alpha}}^{p_l}_{t}\right)$ and $H_{a}\left({\boldsymbol{\alpha} }^{p_l}_r,{\boldsymbol{\alpha}}^{q_l}_{t}\right)$ denote the discrete representations of ${a}_{r}^{i}\left({\boldsymbol{\alpha}}^{l}_{r}\right)$, ${a}_{t}^{j}\left({\boldsymbol{\alpha}}^{l}_{t}\right)$ and $H_{a}\left({\boldsymbol{\alpha}}^{l}_r,{\boldsymbol{\alpha}}^{l}_{t}\right)$ in (\ref{SISO_Continuous});
	${\bf a}_{r,i}^{l}\triangleq\left[{a}_{r}^{i}\left({\boldsymbol{\alpha}}_{r}^{1}\right), \cdots, {a}_{r}^{i}\left({\boldsymbol{\alpha}}_{r}^{d_r}\right)\right]^{H}\in \mathbb{C}^{d_r\times 1}$ represents the receive response of the $i$-th receive element in the $l$-th arrival cluster. Similarly, ${\bf a}_{t,j}^{l}\triangleq\left[{a}_{t}^{j}\left({\boldsymbol{\alpha}}_{t}^{1}\right), \cdots, {a}_{t}^{j}\left({\boldsymbol{\alpha}}_{t}^{d_t}\right)\right]^{H}\in \mathbb{C}^{d_t\times 1}$ denotes the transmit response of the $j$-th transmit element in the $l$-th departure cluster, ${\bf H}_{a}^{l}\in \mathbb{C}^{d_r^l\times d_t^l}$ represents the scattering response of the $l$-th cluster. ${\bf{q}}_{g}=[q_{g}^{1},\cdots,q_{g}^{N_g}]^H\in \mathbb{C}^{N_g\times 1},g\in\left\{r,t\right\}$ represents the adjustable phase responses. ${\bf A}_{g}^{l}=\left[{\bf a}^{l}_{g,1},\cdots,{\bf a}^{l}_{g,N_g}\right]\in \mathbb{C}^{d_g^l\times N_g}$ represents the receive or transmit response of the $l$-th cluster. Finally, ${\bf H}_{w}^{l}\in \mathbb{C}^{N_r\times N_t}$ denotes array-scattering response of $l$-th cluster, and array-scattering response across all $L$ clusters is ${\bf H}_{w}=\frac{1}{\sqrt{L}}\sum\limits_{l=1}^{L}{\bf H}_{w}^{l}\in \mathbb{C}^{N_r\times N_t}$.

	From (\ref{final_siso_channel}), we observe that the DARISA SISO spatial channel comprises five components: the adjustable phase responses of transmitter and receiver ${\bf{q}}_{t}$ and ${\bf{q}}_{r}$, the transmit response ${\bf A}_{t}^{l}$, the receive response ${\bf A}_{r}^{l}$, and the scattering response ${\bf H}_{a}^{l}$. The conventional channel models do not fully consider the influence of the transceiver antennas. Considering the controllability of DARISA, the DARISA can achieve not only different phase responses in various spatial directions but also impart controllable time-variant properties.
	
	\subsection{DARISA MIMO Spatial Channel Model}\label{MIMO-Channel-model}

	In this subsection, we generalize the DARISA SISO spatial channel model to a DARISA MIMO system, as illustrated in Fig. 3. The wireless scattering environment contains $\tilde{L}$ clusters, with $\tilde{d}_{t}^{\tilde l}$ departing and $\tilde{d}_{r}^{\tilde l}$ arriving rays in the $\tilde{l}$-th cluster ($\tilde{l}=1,\cdots,\tilde{L}$). The $M$ transmit-DARISAs and $N$ receive-DARISAs are tightly arranged to form transmit and receive arrays of sizes $\tilde{D}_{t,x}\times \tilde{D}_{t,y}=M\Delta_tN_{t,x}\times \Delta_tN_{t,y}$ and $\tilde{D}_{r,x}\times \tilde{D}_{r,y}=N\Delta_rN_{r,x}\times \Delta_rN_{r,y}$, respectively. Each transmit/receive antenna comprises $N_g,g\in\left\{r,t\right\}$ phase-adjustable metasurface elements, resulting in $N{N}_{r}$ and $M{N}_{t}$ elements in the transmit and receive arrays.
	
	Based on this configuration, the DARISA MIMO spatial channel $\widetilde{\bf{H}}\in{{\mathbb{C}}^{N\times M}}$ is expressed as
	\begin{align}
		\label{final_MIMO_channel}
		\widetilde{\bf{H}}&=\frac{1}{\sqrt{\tilde L}}\sum\limits_{\tilde l=1}^{\tilde L}{\widetilde{\bf{Q}}_{r}^H\underbrace{\widetilde{\bf A}_{r}^{\tilde l H}\widetilde{\bf H}_{a}^{\tilde l}\widetilde{\bf A}_{t}^{\tilde l}}_{\widetilde{\bf H}_{w}^{\tilde l}}\widetilde{\bf{Q}}_{t}}={\widetilde{\bf{Q}}_{r}^H\widetilde{\bf H}_{w}\widetilde{\bf{Q}}_{t}},
	\end{align}
	where $\widetilde{\bf A}_{r}^{\tilde{l}}\in \mathbb{C}^{{\tilde d}_r^{\tilde{l}}\times NN_r}$, $\widetilde{\bf A}_{t}^{\tilde{l}}\in \mathbb{C}^{{\tilde d}_t^{\tilde{l}}\times MN_t}$ and ${\bf H}_{w}^{\tilde l}\in \mathbb{C}^{NN_r\times MN_t}$ correspond to ${\bf A}_{r}^{l}$, ${\bf A}_{t}^{{l}}$ and ${\bf H}_{w}^{l}$ in (\ref{final_siso_channel}). Additionally, ${\widetilde{\bf{Q}}_{t}}\triangleq\textrm{blkdiag}\left(\widetilde{\bf{q}}_{t}^{1},\cdots,\widetilde{\bf{q}}_{t}^{M}\right)\in{{\mathbb{C}}^{MN_t\times M}}$ and  ${\widetilde{\bf{Q}}_{r}}\triangleq\textrm{blkdiag}\left(\widetilde{\bf{q}}_{r}^{1},\cdots,\widetilde{\bf{q}}_{r}^{N} \right)\in{{\mathbb{C}}^{NN_r\times N}}$ are the adjustable phase response matrices of the transmit and receive arrays, respectively. Here, $\widetilde{\bf{q}}_{t}^{m}=[q_{t}^{m,1},\cdots,q_{t}^{m,N_t}]^H\in \mathbb{C}^{N_t\times 1},m \in \mathcal{M}\triangleq \left\{1,\cdots,M\right\}$ and $\widetilde{\bf{q}}_{r}^{n}=[q_{r}^{n,1},\cdots,q_{r}^{n,N_r}]^H\in \mathbb{C}^{N_r\times 1},n\in \mathcal{N}\triangleq \left\{1,\cdots,N\right\}$ represent the adjustable phase responses of the $m$-th transmit antenna and the
	the $n$-th receive antenna, respectively. The phase responses $q_{t}^{m,j}=e^{\jmath\varphi_{t}^{m,j}},j\in \mathcal{J}$ and $q_{r}^{n,i}=e^{\jmath\varphi_{r}^{n,i}},i\in \mathcal{I}$ correspond to the $j$-th transmit element and the $i$-th receive element. Finally, $\widetilde{\bf{H}}_w\in \mathbb{C}^{NN_r\times MN_t}$ represents the array-scattering response of all $\tilde{L}$ clusters, determined by the arrangement of the elements and scattering environment, i.e.,
	\begin{align}
		\label{Scatter_MIMO_channel}
		\widetilde{\bf{H}}_w=\frac{1}{\sqrt{\tilde{L}}}\sum\limits_{\tilde{l}=1}^{\tilde{L}}\widetilde{\bf A}_{r}^{\tilde{l}H}\widetilde{\bf H}_{a}^{\tilde{l}}\widetilde{\bf A}_{t}^{\tilde{l}}=\frac{1}{\sqrt{\tilde L}}\sum\limits_{\tilde l=1}^{\tilde L}\widetilde{\bf H}_{w}^{\tilde l},
	\end{align}
	
	The received signal under the channel (\ref{final_MIMO_channel}) is expressed as
	\begin{align}
		{\bf y}=\widetilde{\bf{H}}{\bf x}+{\bf w},
	\end{align}
	where ${\bf x}=[x_1,\cdots,x_M]^T\in{{\mathbb{C}}^{M\times 1}}$ is the emitted signal, ${\bf w}=[w_{r,1},\cdots,w_{r,N}]^T\in{{\mathbb{C}}^{N\times 1}}$ is the Gaussian noise.
	
	\subsection{DARISA MIMO Spatial-Temporal Channel Model}\label{MIMO-Channel-model}
	The real-time controllability of the DARISA allows the agile phase response time of the metasurface to operate within the nanosecond level \cite{Sleasman}. This enables dynamic agile phase response adjustment of metasurface elements in the space-time domain within one symbol duration. Consequently, we extend the DARISA MIMO spatial channel (\ref{final_MIMO_channel}) to a composite spatial-temporal channel, where all parameters must account for the time variable $t$. The received signal is then re-expressed as
	\begin{align}
		\notag
		{\bf y}(t)&=\widetilde{\bf{H}}(t) {\bf x}(t)+ {\bf w}(t)=\widetilde{\bf{Q}}_{r}^H(t)\widetilde{\bf H}_{w}(t)\widetilde{\bf{Q}}_{t}(t){\bf x}(t)+ {\bf w}(t),
	\end{align}
	For further discussion, we establish the following assumptions:
	
	1. The $\tilde L$ cluster scattering channel remains constant during the symbol emission duration $T$, i.e., $\widetilde{\bf H}_{w}^{\tilde l}(t_k)=\widetilde{\bf H}_{w}^{\tilde l},\ {\bf x}(t_k)={\bf x},\ 0<t_k<T$. This assumption is valid under typical slow-fading and block-fading scenarios.
	
	2. The phase responses of the metasurface elements can be adjusted continuously within $(0,2\pi]$, i.e., $\left[\widetilde{\bf{Q}}_{r}\right]_{n,i}=e^{\jmath\varphi_{r}^{n,i}}$, $\left[\widetilde{\bf{Q}}_{t}\right]_{m,j}=e^{\jmath\varphi_{t}^{m,j}}$, $\varphi_{r}^{n,i}\in  (0,2\pi],\varphi_{t}^{m,j}\in  (0,2\pi]$.
	
	Based on these assumptions, we agilely adjust $K$ phase responses of metasurface elements during one symbol period. The resulting $K$-adjustments received signals are stacked into a $KN$-dimensional received signal vector as
	\begin{align}
		\notag
		\bar{\bf y} & =
		\left[
		\begin{array}{c}
			\widetilde{\bf{H}}(t_1){\bf x}(t_1) \\
			\vdots  \\
			\widetilde{\bf{H}}(t_K){\bf x}(t_K)
		\end{array}
		\right]  +
		\left[
		\begin{array}{c}
			{\bf w}(t_1) \\
			\vdots  \\
			{\bf w}(t_K)
		\end{array}
		\right] \\
		\notag
		&= \left[
		\begin{array}{c}
			\widetilde{\bf{Q}}_{r}^H(t_1)\widetilde{\bf H}_{w}\widetilde{\bf{Q}}_{t}(t_1) \\  
			\vdots  \\
			\widetilde{\bf{Q}}_{r}^H(t_K)\widetilde{\bf H}_{w}\widetilde{\bf{Q}}_{t}(t_K)
		\end{array}
		\right]{\bf x} +\overline{\bf w}\\
	\notag
		&= \overline{\bf{Q}}_{r}^H \overline{\bf H}_w\overline{\bf{Q}}_{t}{\bf x}+\overline{\bf w}\\
			\label{spatial-temporal-MIMO_channel}
			&=\overline{\bf H}_C{\bf x}+\overline{\bf w},
	\end{align}
	where $\overline{\bf{Q}}_{t}=\left[\widetilde{\bf{Q}}_{t}^T(t_1);\cdots; \widetilde{\bf{Q}}_{t}^T(t_K)\right]^T\in \mathbb{C}^{KMN_t\times M}$ and $\overline{\bf{Q}}_{r}=\textrm{blkdiag}\left(\widetilde{\bf{Q}}_{r}(t_1),\cdots, \widetilde{\bf{Q}}_{r}(t_K)\right)\in \mathbb{C}^{KNN_r\times KN}$ represent the spatial-temporal adjustable phase responses of the transmit and receive arrays. The spatial-temporal array-scattering response of all $L$ clusters is given by $\overline{\bf H}_w=\textrm{blkdiag}\left(\widetilde{\bf H}_{w}\cdots,\widetilde{\bf H}_{w}\right)\in \mathbb{C}^{KNN_r\times KMN_t}$, while
	$\overline{\bf H}_C \in \mathbb{C}^{KN\times M}$ denotes the composite spatial-temporal channel of all $\tilde L$ clusters. Additionally,
	$\overline{\bf w}\in{{\mathbb{C}}^{KN\times 1}}$ represents the spatial-temporal Gaussian white noise.
	
	From the above analysis, it is evident that the $KN\times M$-dimensional equivalent MIMO channel matrix (\ref{spatial-temporal-MIMO_channel}) is reconstructed by agilely adjusting the phase responses without increasing the signal bandwidth. The MIMO system fully leverages the spatial resources of the channel by employing multiple antennas at the transceiver, thereby providing multiplexing gain (DoF). The DoF is contingent upon the performance of the composite spatial-temporal channel $\overline{\bf H}_C$ in (\ref{spatial-temporal-MIMO_channel}), which will be analyzed in detail in the next section.

	\section{ DoF analysis of DARISA MIMO }\label{DoF-analysis}
	In MIMO systems, the DoF refers to the number of independent communication data streams that the transmission system can support. Also known as multiplexing gain, the DoF characterizes the rate at which communication capacity increases with the signal-to-noise ratio (SNR) at a high SNR region. From (\ref{spatial-temporal-MIMO_channel}), it is evident that the DARISA MIMO system constructs an equivalent spatial-temporal extended channel through agilely adjusting phase responses of metasurface elements. According to the  MIMO communication theory, the rank of the composite spatial-temporal channel $\overline{\bf H}_C$ in (\ref{spatial-temporal-MIMO_channel}) represents the DoF. Specifically,
	$\overline{\bf H}_C={\overline{\bf{Q}}}_{r}^H\overline{\bf H}_w{\overline{\bf{Q}}}_{t}$ encompasses the spatial-temporal adjustable phase responses of transmit and receive DARISA arrays $\overline{\bf{Q}}_{r}$ and $\overline{\bf{Q}}_{t}$, along with the total spatial-temporal array-scattering response of all $\tilde L$ clusters $\overline{\bf H}_w$. Thus, the rank of $\overline{\bf H}_C$ is influenced by the number of transmit-DARISA and receive-DARISA $M$ and $N$, the number of each transmit-DARISA and receive-DARISA elements $N_t$ and $N_r$, agility frequentness $K$, and the scattering response of wireless environment, including the number of clusters and the angular spread of those clusters.
	
	Since each metasurface element of the DARISA can independently adjust phase responses, appropriately designing the responses can render the columns of $\overline{\bf{Q}}_{r}$ and $\overline{\bf{Q}}_{t}$ full rank, making the rank of $\overline{\bf H}_C$ dependent on $\overline{\bf H}_w$ in (\ref{spatial-temporal-MIMO_channel}). $\overline{\bf H}_w=\textrm{diag}\left\{\widetilde{\bf H}_w,\cdots,\widetilde{\bf H}_w \right\}$ implies that $\widetilde{\bf H}_w$ in (\ref{Scatter_MIMO_channel}) determines the rank of $\overline{\bf H}_w$. Since $\widetilde{\bf H}_w$ is determined by the arrangement of elements and the scattering environment, which cannot be artificially controlled, we first analyze the rank of $\widetilde{\bf H}_{w}$.
	
	According to the literature {\cite{Poon}}, the rank of $\widetilde{\bf H}_{w}$ is jointly determined by the angular spread of all clusters and the sizes of the transceiver arrays. We prove the following lemma.
	\begin{lemma}\label{lemma1}
		$\textrm{rank}\left(\widetilde{\bf H}_{w}\right)=\min\left(\tilde{d}_{r},\tilde{d}_{t}\right),$
where $\tilde{d}_g\approx{c}_{g,1}{c}_{g,2}\pi\tilde{D}_{g,x} \tilde{D}_{g,y}, g\in\left\{r,t\right\}$, $\tilde{D}_{g,x} \tilde{D}_{g,y}$ is the sizes of transmit/receive DARISA array, $0\leq {c}_{g,1}\leq 1$ and $0\leq {c}_{g,2}\leq 1$ represent the semi-major and semi-minor axes, respectively, of the tightest ellipse encompassing all space-domain clusters projected onto the wavenumber domain.
	\end{lemma}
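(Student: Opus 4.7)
The plan is to rewrite $\widetilde{\bf H}_w$ as a single triple product of block matrices across all $\tilde L$ clusters, and then bound (and match) its rank by the effective dimension of the array--response subspaces in the wavenumber domain. Concretely, I would first define the stacked blocks
\begin{equation*}
\widetilde{\bf A}_g^{\mathrm{stk}} \triangleq \bigl[\widetilde{\bf A}_g^{1\,T},\ldots,\widetilde{\bf A}_g^{\tilde L\,T}\bigr]^T,\qquad
\widetilde{\bf H}_a^{\mathrm{stk}} \triangleq \tfrac{1}{\sqrt{\tilde L}}\,\textrm{blkdiag}\bigl(\widetilde{\bf H}_a^{1},\ldots,\widetilde{\bf H}_a^{\tilde L}\bigr),
\end{equation*}
so that $\widetilde{\bf H}_w = (\widetilde{\bf A}_r^{\mathrm{stk}})^H\,\widetilde{\bf H}_a^{\mathrm{stk}}\,\widetilde{\bf A}_t^{\mathrm{stk}}$. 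By sub-multiplicativity of rank, $\textrm{rank}(\widetilde{\bf H}_w)$ is at most $\min_g\textrm{rank}(\widetilde{\bf A}_g^{\mathrm{stk}})$, so the whole question reduces to counting the effective rank of $\widetilde{\bf A}_g^{\mathrm{stk}}$.

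The next step is to port each cluster's angular support $\Omega_g^{\tilde l}$ into the wavenumber plane via the direction cosines $\kappa_x = \cos\phi\sin\theta$, $\kappa_y = \sin\phi\sin\theta$. Using the planar geometry of the DARISA array (all elements have $t_z^{j}=r_z^{i}=0$), each column of $\widetilde{\bf A}_g^{\mathrm{stk}}$ is a 2-D complex exponential $e^{-\jmath 2\pi(\mathbf r\cdot\boldsymbol\kappa)}$ evaluated on samples $\boldsymbol\kappa$ drawn from the union of cluster projections. By the 2-D Nyquist/Landau counting (as invoked in \cite{Poon},\cite{Pizzo1}--\cite{Pizzo3}), the dimension of the signal space spanned by such exponentials, when the samples lie in a wavenumber region of area $A$ and the aperture is $\tilde D_{g,x}\times\tilde D_{g,y}$, equals $A\,\tilde D_{g,x}\tilde D_{g,y}$. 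Here I would invoke the hypothesis of the lemma: the tightest ellipse enclosing all $\tilde L$ cluster projections has semi-axes $c_{g,1},c_{g,2}\in[0,1]$ and therefore area $\pi c_{g,1}c_{g,2}$. This yields the bound $\textrm{rank}(\widetilde{\bf A}_g^{\mathrm{stk}})\le \tilde d_g \approx c_{g,1}c_{g,2}\pi\tilde D_{g,x}\tilde D_{g,y}$, and hence the inequality $\textrm{rank}(\widetilde{\bf H}_w)\le\min(\tilde d_r,\tilde d_t)$.

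For the reverse inequality I would argue that the bound is generically tight. Under the standard multi-cluster assumptions (independent clusters, diffuse scattering within each cluster) the scattering matrices $\widetilde{\bf H}_a^{\tilde l}$ have full rank $\min(\tilde d_r^{\tilde l},\tilde d_t^{\tilde l})$ almost surely, so $\widetilde{\bf H}_a^{\mathrm{stk}}$ is itself full rank with $\sum_{\tilde l}\min(\tilde d_r^{\tilde l},\tilde d_t^{\tilde l})$ columns/rows. Combined with the fact that Nyquist-spaced samples inside the enclosing ellipse give linearly independent columns of $\widetilde{\bf A}_g^{\mathrm{stk}}$ (truncated prolate-spheroidal/Slepian basis argument), the composite product attains the minimum dimension, yielding $\textrm{rank}(\widetilde{\bf H}_w)=\min(\tilde d_r,\tilde d_t)$.

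The main obstacle I anticipate is the second direction, i.e., showing that the upper bound is actually achieved rather than merely stated. The subtlety is that when cluster supports overlap in the wavenumber plane, the stacked array response can contain repeated or near-repeated directions and lose rank; ruling this out cleanly requires either a generic-position assumption on the cluster centers/spreads or a careful appeal to the Landau--Slepian concentration result to guarantee that the union of cluster footprints still spans a subspace of the expected dimension inside the enclosing ellipse. Everything else (the block-matrix rearrangement and the sub-multiplicative rank bound) is essentially algebraic bookkeeping once the Nyquist counting in the wavenumber domain is in place.
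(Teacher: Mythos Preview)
Your approach and the paper's share the same core ingredient---the wavenumber-domain mapping and the Nyquist/Landau area count giving $\tilde d_g \approx c_{g,1}c_{g,2}\pi\tilde D_{g,x}\tilde D_{g,y}$---but the architectures differ. You work cluster-by-cluster, stacking the $\widetilde{\bf A}_g^{\tilde l}$ into a tall block $\widetilde{\bf A}_g^{\mathrm{stk}}$, then invoke sub-multiplicativity for the upper bound and generic full-rankness of the scattering blocks for the lower bound. The paper instead absorbs all $\tilde L$ clusters into a \emph{single} continuous angular response, maps that globally to the wavenumber domain, and Nyquist-samples once to obtain $\breve{\bf H}_w = \sqrt{NN_rMN_t}\,\breve{\bf A}_r\,\breve{\bf H}_a\,\breve{\bf A}_t^H$ with $\breve{\bf A}_r,\breve{\bf A}_t$ \emph{semi-unitary} (i.e., $\breve{\bf A}_g^H\breve{\bf A}_g={\bf I}$). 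Semi-unitarity gives $\textrm{rank}(\breve{\bf H}_w)=\textrm{rank}(\breve{\bf H}_a)$ in one line, so both inequalities collapse to counting the non-zero rows/columns of the angular matrix $\breve{\bf H}_a$---precisely the ellipse-support computation. The payoff is that the paper's route structurally avoids the obstacle you flagged: overlapping cluster footprints never threaten rank because there is no stacking; the union of clusters enters only as the support of the scalar angular power $\breve\sigma^2$. Your route is viable but purchases that robustness via a generic-position hypothesis, whereas the paper gets it for free from the Fourier semi-unitary structure.
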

	\begin{proof}
		The proof is provided in Appendix \ref{proof1}.
	\end{proof}
	
	From Lemma \ref{lemma1}, we conclude that the rank of $\widetilde{\bf H}_{w}$ is equal to $\min\left(\tilde{d}_{r},\tilde{d}_{t}\right)$. To further analyze the rank of the composite spatial-temporal channel $\overline{\bf H}_C$ in (\ref{spatial-temporal-MIMO_channel}), we present the following theorem.
	\begin{thm}\label{theorem1}
		$\textrm{rank}({\overline{\bf H}}_C)= \min\left(KN,M,\tilde{d}_t,\tilde{d}_r\right)$.
	\end{thm}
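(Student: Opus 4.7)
The plan is to establish matching upper and lower bounds on $\textrm{rank}(\overline{\bf H}_C)$, working from the factorization $\overline{\bf H}_C = \overline{\bf Q}_r^H \overline{\bf H}_w \overline{\bf Q}_t$ in (\ref{spatial-temporal-MIMO_channel}) together with the specific block structure of each factor: $\overline{\bf H}_w$ is block-diagonal with $K$ identical copies of $\widetilde{\bf H}_w$; $\overline{\bf Q}_t$ is a vertical stack of the per-time-slot block-diagonals $\widetilde{\bf Q}_t(t_k)$; and $\overline{\bf Q}_r$ is block-diagonal with blocks $\widetilde{\bf Q}_r(t_k)$. This lets me write $\overline{\bf H}_C$ as a vertical stack of $K$ matrices of the form $\widetilde{\bf Q}_r^H(t_k)\widetilde{\bf H}_w\widetilde{\bf Q}_t(t_k) \in \mathbb{C}^{N\times M}$.

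For the upper bound, the dimension count $\overline{\bf H}_C \in \mathbb{C}^{KN\times M}$ yields $\textrm{rank}(\overline{\bf H}_C) \le \min(KN, M)$ immediately. To derive the physical caps $\tilde{d}_r$ and $\tilde{d}_t$, I would substitute the scattering decomposition $\widetilde{\bf H}_w = \widetilde{\bf A}_r^H \widetilde{\bf H}_a \widetilde{\bf A}_t$ (with $\widetilde{\bf A}_g$ having $\tilde{d}_g$ rows, per Lemma \ref{lemma1}) into each block, and then apply the rank-of-product inequality $\textrm{rank}(ABC) \le \min(\textrm{rank}(A), \textrm{rank}(B), \textrm{rank}(C))$. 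The claim is that since every column of $\overline{\bf H}_w \overline{\bf Q}_t$ is built from the same column space of $\widetilde{\bf A}_r^H$ (dimension $\tilde{d}_r$) and every row passes through the same $\widetilde{\bf A}_t$ (rank $\tilde{d}_t$), these physical caps transfer to $\overline{\bf H}_C$ without being multiplied by $K$.

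For the matching lower bound, I would construct explicit unit-modulus phase matrices $\{\widetilde{\bf Q}_t(t_k), \widetilde{\bf Q}_r(t_k)\}_{k=1}^{K}$ that attain the bound. Using an SVD of $\widetilde{\bf H}_w$ to identify its $\min(\tilde{d}_r, \tilde{d}_t)$ dominant singular directions, and exploiting that each metasurface element has an independently adjustable unit-modulus phase, I would align the transmit/receive phases across different $k$ to excite distinct singular directions at each time slot. A genericity (dimension-counting) argument then shows that such phase vectors produce linearly independent rows/columns across the $K$ blocks until the dimension cap $\min(KN, M, \tilde{d}_r, \tilde{d}_t)$ is saturated.

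The hardest step will be making the upper bound tight at $\min(\tilde{d}_r,\tilde{d}_t)$ rather than the naive block-diagonal bound $K\min(\tilde{d}_r,\tilde{d}_t)$: since $\overline{\bf H}_w$ repeats the same $\widetilde{\bf H}_w$ in each diagonal block, a direct application of $\textrm{rank}(\overline{\bf H}_w) = K\min(\tilde{d}_r,\tilde{d}_t)$ is too loose. The key will be to exploit the fact that $\overline{\bf Q}_t$ has only $M$ columns coupling the $K$ time-slot blocks (it is a vertical stack rather than a block-diagonal), so the reachable subspace within the range of $\overline{\bf H}_w$ is effectively constrained by a single copy of the channel's angular DoF rather than $K$ copies. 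Formalizing this interaction between the repeated $\widetilde{\bf H}_w$ and the structured phase matrices is the crux of the argument.
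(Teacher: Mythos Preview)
Your overall architecture---upper bound via the factorization plus a constructive lower bound---is the same as the paper's. The difference is in how you handle the ``hardest step'' you correctly flag: pushing the physical cap from $K\min(\tilde d_r,\tilde d_t)$ down to $\min(\tilde d_r,\tilde d_t)$.

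Your proposed subspace argument---that every column of $\overline{\bf H}_w\overline{\bf Q}_t$ lies in the $\tilde d_r$-dimensional column space of $\widetilde{\bf A}_r^H$, so the cap transfers without the factor $K$---does not work as stated. Each block $\widetilde{\bf H}_w\widetilde{\bf Q}_t(t_k)$ does have its column space inside that $\tilde d_r$-dimensional range, but after vertically stacking $K$ such blocks the full column sits inside a $K\tilde d_r$-dimensional subspace of $\mathbb{C}^{KNN_r}$, which only recovers the loose bound you are trying to sharpen. The symmetric row-space argument fails as well, because $\widetilde{\bf Q}_t(t_k)$ varies with $k$, so the rows at different time slots live in \emph{different} $\tilde d_t$-dimensional subspaces of $\mathbb{C}^M$. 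The paper does not use a subspace argument here; instead it takes the SVDs of $\overline{\bf Q}_r\overline{\bf Q}_r^H$ and $\overline{\bf Q}_t\overline{\bf Q}_t^H$ and treats the full-column-rank phase matrices informally as row/column selectors that extract a $KN$-row or $M$-column submatrix from a \emph{single} copy of $\widetilde{\bf H}_w$ (not from $\overline{\bf H}_w$), and then performs a case split on $M\ge KN$ versus $M<KN$, arguing in each case that once $KN$ or $M$ exceeds $\min(\tilde d_t,\tilde d_r)$ the selected rows/columns of $\widetilde{\bf H}_w$ must be linearly dependent. To finish your upper bound you would need an analogous reduction from $\overline{\bf H}_w$ back to a single $\widetilde{\bf H}_w$, rather than the column-space claim as written.
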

	\begin{proof}
		The detail proof can be found Appendix \ref{proof-of-theorem1}.
	\end{proof}
	Theorem \ref{theorem1} represents the DoF of the DARISA MIMO system under the DAARP strategy. According to this theorem, we have the following remarks.
	
	\textbf{Remark 1: } The DoF of the DARISA MIMO system under DAAPR strategy is $\min\left(KN,M,\tilde{d}_t,\tilde{d}_r\right)$. This value is determined by the number of transmit-DARISAs and receive-DARISAs $N,M$, agility frequentness $K$, the array sizes of DARISA-transceiver $ \tilde{D}_{g,x}\tilde{D}_{g,y},g\in\left\{r,t\right\}$, and scattering environment ${c}_{g,1},{c}_{g,2}$. This implies that $\min\left(\tilde{d}_t,\tilde{d}_r\right)$ serves as the upper bound on the DoF influenced by the array size and scattering environment, which cannot be exceeded.
	
		\textbf{Remark 2: } When the number of transmit-DARISAs exceeds that of receive-DARISAs ($M>N$), the DoF is limited by $N$. Agilely adjusting $K$ phase responses of metasurface elements can improve the row rank of $\overline{\bf H}_C$, extending the DoF to $\min\left(KN,M,\tilde{d}_t,\tilde{d}_r\right)$. In this case, agilely adjusting the phase responses in the time domain can improve the DoF without increasing the number of receive-DARISAs and signal bandwidths. Conversely, if the number of transmit antennas is fewer than the number of receive antennas ($M<N$), the DoF is limited by $M$, resulting in $\textrm{rank}\left(\overline{\bf H}_C\right)= \min\left(M,\tilde{d}_t,\tilde{d}_r\right)$; in this case, agilely adjusting phase responses does not increase the DoF.
	
		\textbf{Remark 3: } According to the composite spatial-temporal channel (\ref{spatial-temporal-MIMO_channel}), we find that agilely adjusting phase responses of metasurface elements at both transmitter and receiver or solely at the receiver can increase the row rank $KN$ of $\overline{\bf H}_C$ but not the column rank $M$. It implies that the DoFs are the same by agilely adjusting phase responses at both transmitter and receiver or only at the receiver. For agilely adjusting the phase responses at the transmitter, the duration of each symbol is only $1/K$ of the original duration, which corresponds to an expansion of occupied bandwidth.

	\section{Effective DoF Analysis and Optimization}
		The theoretical analysis of the DoF for DARISA MIMO systems under the DAAPR strategy is presented in Section III. Theorem 1 establishes that the DoF is determined by the number of transceiver DARISAs, agility frequentness, array sizes, and scattering environment. According to the fundamental MIMO theory, in the high-SNR regime, the DoF corresponds to the number of independent subchannels for parallel data streams, dictating the slope of channel capacity growth. However, channel capacity depends on both the DoF and the singular value distribution of the channel matrix. The magnitude of the singular value indicates the quality of that parallel subchannel for information transmission.
		To enable a more precise characterization, EDoF is introduced to account for both the quantity and quality of subchannels. While DoF analysis provides the theoretical foundation for dynamic agile adjustment design, EDoF optimization refines capacity enhancements by addressing subchannel efficiency. These two metrics form a complementary analytical framework. Consequently, this section focuses on maximizing EDoF by optimizing the agile phase responses of metasurface elements, directly linking EDoF improvement to channel capacity.
	
	\subsection{ Evolutionary Motivation from DoF to EDoF}\label{EDOF_analysis}
	We rewrite (\ref{spatial-temporal-MIMO_channel}) more briefly as
	\begin{align}
		\bar{\bf y}= \overline{\bf H}_C{\bf x}+\overline{\bf w},
	\end{align}
	where $\bar{\bf y}\in \mathbb{C}^{KN\times 1},\overline{\bf H}_C\in \mathbb{C}^{KN\times M}$ and $\overline{\bf w}\in \mathbb{C}^{KN\times 1}$ are expressed in (\ref{spatial-temporal-MIMO_channel}). Each element of $\overline{\bf w}$ obeys a complex Gaussian distribution with zero mean and $\delta^2$ variance.
	
Based on the work \cite{Tse}, channel capacity is jointly determined by the rank of $\overline{\mathbf{H}}_C$ and the distribution of its singular values. At low SNR, optimal capacity is achieved by allocating all power to the sub-channel with the largest singular value, which is independent of DoF.	At high SNR, allocating equal power across all sub-channels becomes asymptotically optimal, leading to an approximation of capacity as
	\begin{align}
		\notag
		C&\approx \textrm{rank}\left(\overline{\mathbf{H}}_C\right)\log(\textnormal{SNR})+\sum_{a=1}^{\textrm{rank}\left(\overline{\mathbf{H}}_C\right)}\log\left( \frac{\lambda_a^2}{\textrm{rank}\left(\overline{\mathbf{H}}_C\right)}\right),
	\end{align}
	where ${\textnormal{SNR}}=\frac{P}{KM\delta^2}$. The channel capacity is influenced by both the rank of $\overline{\mathbf{H}}_C$ and the distribution of its singular values $\lambda_a^2$. Therefore, we primarily focuses on capacity at high SNR. The DoF represented by $\textrm{rank}\left(\overline{\mathbf{H}}_C\right)$ indicates the number of independently transmitted information streams. The singular values $\lambda_a^2$ reflect the distribution of subchannel through the condition number of MIMO channel, defined as $\frac{\max_a \lambda_a^2}{\min_a \lambda_a^2}$. A condition number close to 1 indicates a well-conditioned channel, which corresponds to maximal channel capacity. Existing studies on HMIMOS \cite{Yuan1}-\cite{Yuan3} that focus on channel DoF perspective is flawed because they neglect the influence of singular value distribution.
	
	To address this issue, the EDoF is utilized, which considers both channel DoF and the distribution of singular values \cite{Muharemovic}-\cite{Verdu}. Consequently, channel capacity can be approximated as
	\begin{align}
		\label{EDoF-capacity}
		C\approx \Psi\left(\overline{\mathbf{H}}_C\right)\log_2\left(1+\frac{\textnormal{SNR}}{\Psi\left(\overline{\mathbf{H}}_C\right)}\right),
	\end{align}
	where the EDoF of $\overline{\mathbf{H}}_C$ is denoted as
	\begin{align}
		\label{Original-EDoF}
		\Psi\left(\overline{\mathbf{H}}_C\right)\triangleq\left(\frac{\textrm{Tr}\left(\overline{\mathbf{H}}_C\overline{\mathbf{H}}_C^H\right)}{\big{|}\big{|}\overline{\mathbf{H}}_C\overline{\mathbf{H}}_C^H\big{|}\big{|}_F}
		\right)^2=\frac{\left(\sum_{a=1}^{\textrm{rank}\left(\overline{\mathbf{H}}_C\right)}\lambda_a^2\right)^2}{\sum_{a=1}^{\textrm{rank}\left(\overline{\mathbf{H}}_C\right)}\lambda_a^4}.
	\end{align}
	We establish Lemma \ref{lemma2} to prove that channel capacity monotonically increases with EDoF. Therefore, the EDoF is critical.
	
	\begin{lemma}\label{lemma2}
	The channel capacity monotonically increases with EDoF.
	\end{lemma}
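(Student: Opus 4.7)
The plan is to treat the capacity approximation in (\ref{EDoF-capacity}) as a single-variable function of $\Psi \triangleq \Psi(\overline{\mathbf{H}}_C)$ with $\text{SNR}$ held fixed, and then show that its derivative with respect to $\Psi$ is strictly positive on $(0,\infty)$. Concretely, I would define $f(\Psi) \triangleq \Psi \log_2\!\left(1+\text{SNR}/\Psi\right)$ and differentiate directly.

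The derivative computation is routine. Applying the product and chain rules and introducing the substitution $u \triangleq \text{SNR}/\Psi > 0$ yields
\begin{equation*}
f'(\Psi) \;=\; \log_2(1+u) \;-\; \frac{1}{\ln 2}\cdot\frac{u}{1+u} \;=\; \frac{1}{\ln 2}\left[\ln(1+u)-\frac{u}{1+u}\right].
\end{equation*}
So the whole question reduces to the elementary inequality $\ln(1+u) > u/(1+u)$ for all $u>0$. I would establish this by setting $g(u) \triangleq \ln(1+u) - u/(1+u)$, noting $g(0)=0$, and computing $g'(u) = 1/(1+u) - 1/(1+u)^2 = u/(1+u)^2 > 0$ for $u>0$, so $g$ is strictly increasing and hence strictly positive on $(0,\infty)$. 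This gives $f'(\Psi)>0$, proving monotonicity.

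I do not anticipate a genuine obstacle here: the statement is essentially a one-variable calculus fact about the function $x\mapsto x\log(1+c/x)$, and the only subtlety worth flagging is that $\Psi$ takes values in $[1,\text{rank}(\overline{\mathbf{H}}_C)]$ (by Cauchy--Schwarz applied to the squared singular values) rather than in all of $(0,\infty)$, but the argument above covers this range. If desired, I would also mention the intuitive limits $f(\Psi)\to 0$ as $\Psi\to 0^+$ and $f(\Psi)\to \text{SNR}/\ln 2$ as $\Psi\to\infty$ to make plain that $f$ increases from $0$ toward its linear asymptote, so that larger EDoF, with SNR fixed, always yields larger capacity under the approximation (\ref{EDoF-capacity}).
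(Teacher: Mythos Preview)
Your proposal is correct and follows essentially the same route as the paper's proof: both differentiate $f(\Psi)=\Psi\log_2(1+\text{SNR}/\Psi)$, substitute $u=\text{SNR}/\Psi$, and reduce the question to showing $g(u)=\ln(1+u)-u/(1+u)>0$ via $g(0)=0$ and $g'(u)=u/(1+u)^2>0$. The only difference is that the paper additionally checks the limiting behavior of $f'$ as $s\to 0^+$ and $s\to\infty$ before the substitution argument, but this step is redundant once the $g$-analysis is in place, so your streamlined version loses nothing.
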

	\begin{proof}
		The proof is provided in Appendix \ref{proof_lemma2}.
	\end{proof}
	
	\subsection{Analysis of Effective DoF}\label{EDOF_analysis}
	In this subsection, we will analyze EDoF in detail.
	
	Based on (\ref{Original-EDoF}), the range of EDoF is given by $\Psi\left(\overline{\mathbf{H}}_C\right) \in \left[1,\textrm{rank}\left(\overline{\mathbf{H}}_C\right)\right]$. The EDoF is minimal when $\textrm{rank}\left(\overline{\mathbf{H}}_C\right)=1$, and is maximal when $\overline{\mathbf{H}}_C$ achieves full rank with all eigenvalues equal. This implies that the channel DoFs based on Theorem 1 is the theoretical upper bound of EDoF.
	
	The EDoF is dependent on $\overline{\mathbf{H}}_C= {\overline{\bf{Q}}}_{r}^H\overline{\bf H}_w{\overline{\bf{Q}}}_{t}$ in (\ref{spatial-temporal-MIMO_channel}). To maximize the EDoF, by giving $\overline{\mathbf{H}}_w$, $\overline{\bf{Q}}_{r}$ and $\overline{\bf{Q}}_{t}$ should be optimized. To further analyze the impact of optimization results of $\overline{\bf{Q}}_{r}$ and $\overline{\bf{Q}}_{t}$ on the EDoF, we present Lemma \ref{lemma3}.
	\begin{lemma}\label{lemma3}
		When the phase responses of the metasurface elements can be adjusted continuously within $(0,2\pi]$, optimizing the phases of both $\overline{\bf{Q}}_{t}$ and $\overline{\bf{Q}}_{r}$ is equivalent to optimizing the phase of $\overline{\bf{Q}}_{r}$ only.
	\end{lemma}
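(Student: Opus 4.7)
The plan is to establish the equivalence as an equality of optimal values, namely to prove that for any admissible reference matrix $\overline{\bf Q}_t^{\mathrm{ref}}$ (for instance, the matrix whose nonzero entries all equal unity),
$\max_{\overline{\bf Q}_r,\,\overline{\bf Q}_t}\Psi(\overline{\mathbf H}_C) \;=\; \max_{\overline{\bf Q}_r}\Psi(\overline{\mathbf H}_C)\bigm|_{\overline{\bf Q}_t=\overline{\bf Q}_t^{\mathrm{ref}}}.$
The direction ``$\geq$'' is immediate because the receive-only problem is a restriction of the joint problem. All the substance of the lemma is the reverse direction, for which I would show that every joint optimum $(\overline{\bf Q}_r^\star,\overline{\bf Q}_t^\star)$ admits a receive-only surrogate $(\overline{\bf Q}_r',\overline{\bf Q}_t^{\mathrm{ref}})$ producing the same EDoF value.

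The first concrete step is to rewrite $\Psi$ as a function of the spectrum of $\mathbf G\triangleq\overline{\mathbf H}_C\overline{\mathbf H}_C^H=\overline{\bf Q}_r^H\,\overline{\bf H}_w\,\overline{\bf Q}_t\overline{\bf Q}_t^H\,\overline{\bf H}_w^H\,\overline{\bf Q}_r$. Two structural facts are then key. First, $\overline{\bf H}_w=\mathrm{blkdiag}(\widetilde{\bf H}_w,\dots,\widetilde{\bf H}_w)$ causes the product to decouple across the $K$ time slots, so that the $k$-th row block of $\overline{\mathbf H}_C$ is the ordinary spatial channel $\widetilde{\bf Q}_r^H(t_k)\widetilde{\bf H}_w\widetilde{\bf Q}_t(t_k)$. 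Second, the columns of $\overline{\bf Q}_t$ have pairwise disjoint supports and unit-modulus entries, which gives the phase-independent identity $\overline{\bf Q}_t^H\overline{\bf Q}_t=KN_t\,\mathbf I_M$. Combined with the continuous-phase assumption $\varphi\in(0,2\pi]$, these properties will let me construct, slot by slot, a unit-modulus block-diagonal $\widetilde{\bf Q}_r'(t_k)$ whose action on $\widetilde{\bf H}_w\widetilde{\bf Q}_t^{\mathrm{ref}}$ reproduces the action of $\widetilde{\bf Q}_r^{\star H}(t_k)$ on $\widetilde{\bf H}_w\widetilde{\bf Q}_t^\star(t_k)$ up to a transformation that preserves singular values. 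Since the EDoF depends only on the singular value distribution of $\overline{\mathbf H}_C$, the resulting $\overline{\bf Q}_r'$ yields the same $\Psi$ as the joint optimum.

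The main obstacle I anticipate is carrying out this absorption rigorously while respecting the asymmetric structural constraints on the two phase matrices: $\overline{\bf Q}_r$ is block-diagonal in both the time index and the receive-antenna index, whereas $\overline{\bf Q}_t$ is block-diagonal only in the transmit-antenna index and is merely stacked across time. Because of this asymmetry the absorption cannot be written as a single global similarity transformation; it must be verified slot-by-slot and antenna-by-antenna, checking at each step that the constructed $\widetilde{\bf Q}_r'(t_k)$ retains unit-modulus entries, the required block-diagonal pattern, and continuous phases in $(0,2\pi]$. Once this bookkeeping is complete, the reverse inequality follows, the two optimal values coincide, and the optimization in the remainder of the paper can drop $\overline{\bf Q}_t$ from its list of variables, which is precisely the simplification that Lemma~\ref{lemma3} is invoked for.
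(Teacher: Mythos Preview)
Your plan differs substantially from the paper's argument, and the two ``key structural facts'' you single out do not do the work you need. The identity $\overline{\bf Q}_t^{H}\overline{\bf Q}_t=KN_t\,{\bf I}_M$ concerns the wrong product: the EDoF is a function of ${\bf G}=\overline{\bf Q}_r^{H}\overline{\bf H}_w\,\overline{\bf Q}_t\overline{\bf Q}_t^{H}\,\overline{\bf H}_w^{H}\overline{\bf Q}_r$, hence of the \emph{outer} product $\overline{\bf Q}_t\overline{\bf Q}_t^{H}\in\mathbb C^{KMN_t\times KMN_t}$, which is emphatically phase-dependent. Equivalently, although the columns of $\overline{\bf Q}_t$ are always orthogonal with fixed norm, their \emph{span} moves with the phases, and it is this span, projected through $\overline{\bf H}_w$ and $\overline{\bf Q}_r$, that fixes the singular spectrum of $\overline{\bf H}_C$. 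Your first fact, the block-diagonal decoupling of $\overline{\bf H}_w$ across time slots, is correct but only tells you that the $k$-th row block of $\overline{\bf H}_C$ equals $\widetilde{\bf Q}_r^{H}(t_k)\widetilde{\bf H}_w\widetilde{\bf Q}_t(t_k)$; the singular values of the stacked matrix are a global object not determined by the per-slot blocks individually, so a slot-by-slot construction of $\widetilde{\bf Q}_r'(t_k)$ cannot be expected to preserve them. You yourself flag this absorption step as the ``main obstacle,'' but the ingredients you list give no mechanism for carrying it out, and the transformation ``up to something that preserves singular values'' is asserted rather than supplied. That is the genuine gap.

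For comparison, the paper's own proof does not work at the spectral or Gram-matrix level at all. It expands each scalar entry $\overline{\bf H}_C(kn,m)=\frac{1}{\sqrt{\tilde L}}\sum_{i,j,\tilde l}\widetilde q_{r,n}^{\,i}(t_k)\,\widetilde{\bf a}_{r,n}^{\tilde l,iH}\widetilde{\bf H}_a^{\tilde l}\widetilde{\bf a}_{t,m}^{\tilde l,j}\,\widetilde q_{t,m}^{\,j}(t_k)$ and argues, summand by summand, that because each receive phase ranges freely over the unit circle, any transmit phase factor $\widetilde q_{t,m}^{\,j}(t_k)$ can be absorbed into $\widetilde q_{r,n}^{\,i}(t_k)$. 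The whole argument is this one-line scalar absorption, with no matrix decompositions or singular-value bookkeeping. If you wish to align with the paper you should abandon the Gram-matrix viewpoint and attempt to match $\overline{\bf H}_C$ entry-by-entry at the level of individual phase products, rather than trying to control the spectrum through $\overline{\bf Q}_t^{H}\overline{\bf Q}_t$.
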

	\begin{proof}
		The proof is provided in Appendix \ref{proof_lemma3}.
	\end{proof}
	With Lemma \ref{lemma3}, only optimizing $\overline{\bf{Q}}_{r}$ is equivalent to optimizing both $\overline{\bf{Q}}_{r}$ and $\overline{\bf{Q}}_{t}$, so we optimize $\overline{\bf{Q}}_{r}$ to maximize the EDoF in the next subsection.
	
	\textbf{Remark 4:}	According to (\ref{Original-EDoF}), if all singular values $\lambda_a^2$ of $\overline{\mathbf{H}}_C$ are equal, the EDoF is equal to the DoF. Based on Lemma \ref{lemma3}, only optimizing $\overline{\bf{Q}}_{r}$ is equivalent to optimizing both $\overline{\bf{Q}}_{r}$ and $\overline{\bf{Q}}_{t}$. Therefore, if $\overline{\bf{Q}}_{r}$ is optimized to satisfy that all $\lambda_a^2$ are equal, the maximum EDoF is given by $\textrm{rank}\left(\overline{\mathbf{H}}_C\right)= \min\left(KN,M,\tilde{d}_t,\tilde{d}_r\right)$. Since agile dynamic adjustments of phase response guarantees $KN\geq M$, the EDoF further equals $\min\left(M,\tilde{d}_t,\tilde{d}_r\right)$. If $M<\left(\tilde{d}_t,\tilde{d}_t\right)$, the EDoF equals $M$. Conversely, when $M\geq\left(\tilde{d}_t,\tilde{d}_r\right)$, the EDoF is determined by $\left(\tilde{d}_t,\tilde{d}_r\right)$, which indicates that the maximum EDoF increases with array sizes and angular spreads of the wireless channel.

	\subsection{Optimization of Effective DoF}
	In this subsection, we optimize $\overline{\bf{Q}}_{r}$ to maximize the EDoF with continuous and discrete phases, respectively.
	
	\subsubsection{Optimizing EDoF with Continuous Phase}
	First, considering the phase of each element can be continuously designed among $\left[0,2\pi\right)$, we optimize $\overline{\bf{Q}}_{r}$ to maximize the EDoF by fixing the number of elements of each receive-DARISA $N_r$ and agility frequentness $K$. The optimization problem is expressed as
	\begin{subequations}\label{edof}
		\begin{align}
			\max_{\varphi_{r}^{n,i}(t_k)}\ & \frac{\textrm{Tr}\left(\overline{\mathbf{H}}_C\overline{\mathbf{H}}_C^H\right)}{\big{|}\big{|}\overline{\mathbf{H}}_C\overline{\mathbf{H}}_C^H\big{|}\big{|}_F},\\
			\ s.t. \quad &\quad  \varphi_{r}^{n,i}(t_k)\in (0,2\pi], i\in \mathcal{I}, n\in \mathcal{N}, k\in \mathcal{K},
		\end{align}
	\end{subequations}
	where $\varphi_{r}^{n,i}(t_k)$ is the phase response of $i$-th element of $n$-th receive-DARISA at $k$-th agility frequentness, $\mathcal{I}, \mathcal{U}$ are denoted in (\ref{final_MIMO_channel}), $\mathcal{K}\triangleq \left\{1,\cdots,K\right\}$.
	The problem is non-convex due to the objective function. To solve this problem, we denote
	\begin{align}
		\notag
		&{\bf C}\triangleq {\overline{\mathbf{H}}}_{w}{\overline{\bf{Q}}}_{t}{\overline{\bf{Q}}}_{t}^H{\overline{\mathbf{H}}}_{w}^H\in \mathbb{C}^{{KNN_r\times KNN_r}},\\
		\notag
		& {\bf E}\triangleq {\overline{\bf{Q}}}_{r}{\overline{\bf{Q}}}_{r}^H=\textrm{blkdiag}\left(\widetilde{\bf{q}}_{r}^{1}(t_1)\widetilde{\bf{q}}_{r}^{1H}(t_1),\cdots, \widetilde{\bf{q}}_{r}^{N}(t_1)\widetilde{\bf{q}}_{r}^{NH}(t_1),\right.\\
		\notag
		&\left.\cdots, \widetilde{\bf{q}}_{r}^{1}(t_K)\widetilde{\bf{q}}_{r}^{1H}(t_K),\cdots, \widetilde{\bf{q}}_{r}^{N}(t_K)\widetilde{\bf{q}}_{r}^{NH}(t_K)\right)\in \mathbb{C}^{{KNN_r\times KNN_r}},
	\end{align}
	where ${\bf E}$ is a block diagonal matrix with $\textrm{rank}\left({\bf E}\right)=KN$, the rank of each block diagonal matrix of $\bf E$ equal $1$ and all diagonal elements equal $1$. Since
	$\textrm{Tr}\left(\overline{\mathbf{H}}_C\overline{\mathbf{H}}_C^H\right)=\textrm{Tr}\left({\overline{\bf{Q}}}_{r}^H{\overline{\mathbf{H}}}_{w}{\overline{\bf{Q}}}_{t}{\overline{\bf{Q}}}_{t}^H
	{\overline{\mathbf{H}}}_{w}^H{\overline{\bf{Q}}}_{r}\right)=\textrm{Tr}\left({\bf CE}\right)$, problem (\ref{edof1}) is equivalently transformed as
	\begin{subequations}\label{edof1}
		\begin{align}
			\max_{{\bf E}} & \quad \frac{\textrm{Tr}\left({\bf C}{\bf E}\right)}{{\big{|}\big{|}{\bf C}{\bf E}\big{|}\big{|}_F}}, \label{edof1a}\\
			s.t. &\quad {\bf E}_{z,z}=1,{\bf E}_{c,d}=0,\textrm{rank}\left({\bf E}\right)=KN,\label{edof1b}\\
			& \quad z\in\mathcal{X},\ c\in \mathcal{X}\backslash\mathcal{Y},\ d \in \mathcal{X}\backslash\mathcal{Y},c\neq d,\label{edof1c}
		\end{align}
	\end{subequations}
	where $\mathcal{X}$ denotes the set $\left\{1,2,\cdots,KNN_r\right\}$, $\mathcal{Y}$ denotes the set $\left\{\left\{kNN_r+1,kNN_r+2,\cdots,(k+1)NN_r\right\}|\forall k\in \mathcal{K}\right\}$, $\mathcal{X}\backslash\mathcal{Y}$ represents the difference set between set $\mathcal{X}$ and set $\mathcal{Y}$. This problem still is non-convex due to the objective function and fixed-rank constraint $\textrm{rank}\left({\bf E}\right)=KN$.
	
	To address these issues, we introduce an auxiliary parameter $\zeta$ by the Dinkelbach transformation method \cite{Shen2}, \cite{W}. Then, we develop a semidefinite relaxation (SDR) method to ignore the fixed rank constraint $\textrm{rank}\left({\bf E}\right)=KN$, thereby transforming problem (\ref{edof1}) as
	\begin{subequations}\label{edof2}
		\begin{align}
			\max_{{\bf E}}\ & \textrm{Tr}\left({\bf C}{\bf E}\right)-\zeta{\big{|}\big{|}{\bf C}{\bf E}\big{|}\big{|}_F},\\
			\ s.t. \ & (\ref{edof1b}),\ (\ref{edof1c}).
		\end{align}
	\end{subequations}
	Given the parameter $\zeta$, this problem is convex. Next, we analyze the impact of $\zeta$ on the problem (\ref{edof1}).
	
	The optimal solution of (\ref{edof2}) varies for different values of $\zeta$. If there exists a unique $\zeta^{opt}$ such that the optimal objective function of (\ref{edof2}) equals $0$, $\varphi_{r}^{{n,i}^{opt}}(t_k)$, derived from the optimal solution ${\bf E}^{opt}$ of problem (\ref{edof2}), serves as the optimal solution for original problem (\ref{edof1}). This holds true because when
	\begin{align}
		\notag
		&\max_{{\bf E}}\textrm{Tr}\left({\bf C}{\bf E}\right)-\zeta^{opt}{\big{|}\big{|}{\bf C}{\bf E}\big{|}\big{|}_F}\\
		\notag
		&=\textrm{Tr}\left({\bf C}{\bf E}^{opt}\right)-\zeta^{opt}{\big{|}\big{|}{\bf C}{\bf E}^{opt}\big{|}\big{|}_F}=0,
	\end{align}
	$\zeta^{opt}$ must satisfy
	\begin{align}
		\notag
		\zeta^{opt}=\frac{\textrm{Tr}\left({\bf C}{\bf E}^{opt}\right)}{\big{|}\big{|}{\bf C}{\bf E}^{opt}\big{|}\big{|}_F}=\max_{\varphi_{r}^{n,i}(t_k)} \frac{\textrm{Tr}\left(\overline{\mathbf{H}}_C\overline{\mathbf{H}}_C^H\right)}{\big{|}\big{|}\overline{\mathbf{H}}_C\overline{\mathbf{H}}_C^H\big{|}\big{|}_F}.
	\end{align}
	Therefore, the range of $\zeta^{opt}$ is $\left[1,\sqrt{\overline{\mathbf{H}}_C}\right]$. From the above analysis,  the critical task is to find a $\zeta^{opt}\in \left[1,\sqrt{\overline{\mathbf{H}}_C}\right]$ such that the objective of (\ref{edof2}) equals $0$ to demonstrate that $\zeta^{opt}$ is unique.
	
	We establish Lemma \ref{lemma4} that the optimal objective function of problem (\ref{edof2}) strictly decreases with respect to $\zeta$. Thus, if there exists a $\zeta^{opt}$ for which the optimal objective function of (\ref{edof2}) equals 0, $\zeta^{opt}$ must be unique.
	\begin{lemma}[]\label{lemma4}
		$F(\zeta)=\max_{{\bf E}}\left\{\textrm{Tr}\left({\bf C}{\bf E}\right)-\zeta{\big{|}\big{|}{\bf C}{\bf E}\big{|}\big{|}_F}\right\}$ strictly monotonically decreases with respect to $\zeta$, when $\zeta^{'}<\zeta^{''}$ (any $\zeta\geq 0$), $F(\zeta^{''})<F(\zeta^{'})$.
	\end{lemma}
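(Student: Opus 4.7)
The plan is to use the standard Dinkelbach-style comparison argument, exploiting the fact that the feasible set for $\mathbf{E}$ (defined by the constraints (\ref{edof1b})--(\ref{edof1c})) does not depend on $\zeta$. So the maximizer at one value of $\zeta$ remains a legitimate, though generally suboptimal, test point for the optimization at another value of $\zeta$. This turns the comparison of $F(\zeta')$ and $F(\zeta'')$ into a one-line algebraic identity, and the strictness follows from a mild non-degeneracy argument on $\|\mathbf{C}\mathbf{E}\|_F$.

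First I would fix $0\le \zeta' < \zeta''$ and let $\mathbf{E}''$ denote any maximizer attaining $F(\zeta'')$, so that $F(\zeta'') = \textrm{Tr}(\mathbf{C}\mathbf{E}'') - \zeta''\|\mathbf{C}\mathbf{E}''\|_F$. Since $\mathbf{E}''$ is feasible for the optimization defining $F(\zeta')$, plugging it in gives the lower bound
\begin{align*}
F(\zeta') \;&\ge\; \textrm{Tr}(\mathbf{C}\mathbf{E}'') - \zeta'\|\mathbf{C}\mathbf{E}''\|_F \\
&=\; \bigl[\textrm{Tr}(\mathbf{C}\mathbf{E}'') - \zeta''\|\mathbf{C}\mathbf{E}''\|_F\bigr] + (\zeta''-\zeta')\|\mathbf{C}\mathbf{E}''\|_F \\
&=\; F(\zeta'') + (\zeta''-\zeta')\,\|\mathbf{C}\mathbf{E}''\|_F.
\end{align*}
Because $\zeta''-\zeta'>0$, this already yields weak monotonicity $F(\zeta')\ge F(\zeta'')$, and the remaining task is to promote it to strict monotonicity by verifying $\|\mathbf{C}\mathbf{E}''\|_F>0$.

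The main obstacle, then, is precisely showing $\|\mathbf{C}\mathbf{E}''\|_F>0$ at an optimizer. I would argue by contradiction: if $\|\mathbf{C}\mathbf{E}''\|_F=0$ then $\mathbf{C}\mathbf{E}''=\mathbf{0}$, so $F(\zeta'')=\textrm{Tr}(\mathbf{C}\mathbf{E}'')=0$. However, $\mathbf{C}=\overline{\mathbf{H}}_w\overline{\mathbf{Q}}_t\overline{\mathbf{Q}}_t^H\overline{\mathbf{H}}_w^H$ is nonzero PSD in any non-degenerate channel, while any feasible $\mathbf{E}$ is block-diagonal PSD with unit diagonal and hence $\textrm{Tr}(\mathbf{E}) = KNN_r>0$. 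Choosing a feasible $\mathbf{E}_0$ whose rank-one receive blocks $\widetilde{\mathbf{q}}_r^{n}(t_k)\widetilde{\mathbf{q}}_r^{n H}(t_k)$ are aligned with dominant right singular directions of $\overline{\mathbf{H}}_w\overline{\mathbf{Q}}_t$ gives $\textrm{Tr}(\mathbf{C}\mathbf{E}_0)>0$ and thus $\|\mathbf{C}\mathbf{E}_0\|_F>0$; for such $\mathbf{E}_0$ the ratio $\textrm{Tr}(\mathbf{C}\mathbf{E}_0)/\|\mathbf{C}\mathbf{E}_0\|_F$ is strictly positive, and by the range argument preceding Lemma \ref{lemma4} it is in fact bounded below by the admissible range of $\zeta$, so for $\zeta''$ in that range we obtain $F(\zeta'')\ge \textrm{Tr}(\mathbf{C}\mathbf{E}_0)-\zeta''\|\mathbf{C}\mathbf{E}_0\|_F>0$, contradicting $F(\zeta'')=0$. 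Hence $\|\mathbf{C}\mathbf{E}''\|_F>0$, and substituting back yields $F(\zeta')>F(\zeta'')$.

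I expect the Dinkelbach identity itself to be essentially immediate; the only nontrivial bookkeeping is the non-degeneracy step, which hinges on the fact that the block-diagonal unit-modulus structure of $\overline{\mathbf{Q}}_r$ cannot force the entire column space of $\mathbf{E}$ into $\ker(\mathbf{C})$ for a non-trivial channel. I would flag this as a standing assumption consistent with the channel model in Section II, after which the lemma follows cleanly.
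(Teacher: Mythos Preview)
Your approach is essentially the same as the paper's: fix $\zeta'<\zeta''$, let $\mathbf{E}''$ be the maximizer at $\zeta''$, and use it as a feasible test point at $\zeta'$ to obtain $F(\zeta')\ge \textrm{Tr}(\mathbf{C}\mathbf{E}'')-\zeta'\|\mathbf{C}\mathbf{E}''\|_F > \textrm{Tr}(\mathbf{C}\mathbf{E}'')-\zeta''\|\mathbf{C}\mathbf{E}''\|_F = F(\zeta'')$. The paper's proof stops there, asserting the middle strict inequality without comment; you go further and justify it via a non-degeneracy argument that $\|\mathbf{C}\mathbf{E}''\|_F>0$, which is a point the paper leaves implicit.
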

	\begin{proof}
		Assuming that ${\bf E}^{''}$ is the optimal solution of $F(\zeta^{''})$, since $\zeta^{'}<\zeta^{''}$ and any $\zeta\geq 0$, we obtain
		\begin{align}
			\notag
			F(\zeta^{''})&=\max_{\bf E}\left\{\textrm{Tr}\left({\bf C}{\bf E}\right)-\zeta^{''}{\big{|}\big{|}{\bf C}{\bf E}\big{|}\big{|}_F}\right\}\\
			\notag
			&=\textrm{Tr}\left({\bf C}{\bf E}^{''}\right)-\zeta^{''}{\big{|}\big{|}{\bf C}{\bf E}^{''}\big{|}\big{|}_F}\\
			\notag
			&<\textrm{Tr}\left({\bf C}{\bf E}^{''}\right)-\zeta^{'}{\big{|}\big{|}{\bf C}{\bf E}^{''}\big{|}\big{|}_F}\\
			\notag
			&\leq \max_{\bf E}\left\{\textrm{Tr}\left({\bf C}{\bf E}\right)-\zeta^{'}{\big{|}\big{|}{\bf C}{\bf E}\big{|}\big{|}_F}\right\}\\
			\notag
			&=F(\zeta^{'}).
		\end{align}
		The proof is completed.
	\end{proof}
	
	Next, we prove that there exists a unique $\zeta^{opt}$ such that the optimal objective function of (\ref{edof2}) equals $0$.

	\begin{thm}[]\label{thm2}
		For $\zeta\in \left[1,\sqrt{\textrm{rank}\left(\overline{\mathbf{H}}_C\right)}\right]$, where $\textrm{rank}({\overline{\bf H}}_C)= \min\left(KN,M,\tilde{d}_t,\tilde{d}_r\right)$, there is precisely one $\zeta^{opt}$ for which the objective function of problem (\ref{edof2}) equals 0.
	\end{thm}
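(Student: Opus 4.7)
The plan is to combine the strict monotonicity already given by Lemma \ref{lemma4} with an intermediate value argument on the continuous function $F(\zeta)=\max_{\mathbf{E}}\{\textrm{Tr}(\mathbf{C}\mathbf{E})-\zeta\|\mathbf{C}\mathbf{E}\|_F\}$, restricted to the interval $\zeta\in[1,\sqrt{\textrm{rank}(\overline{\mathbf{H}}_C)}]$. Uniqueness is essentially free: Lemma \ref{lemma4} already shows $F$ is strictly decreasing in $\zeta$, so it can vanish at most once on any interval. The substance of the theorem is therefore the \emph{existence} of a root inside the stated interval, which I will handle by verifying $F(1)\ge 0$ and $F(\sqrt{\textrm{rank}(\overline{\mathbf{H}}_C)})\le 0$ and invoking continuity.

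First I would argue that $F(\zeta)$ is continuous in $\zeta$. The feasible set defined by (\ref{edof1b})--(\ref{edof1c}) (after dropping the fixed-rank constraint) is compact, and the objective $\textrm{Tr}(\mathbf{C}\mathbf{E})-\zeta\|\mathbf{C}\mathbf{E}\|_F$ is jointly continuous in $(\mathbf{E},\zeta)$, so $F$ is continuous by Berge's maximum theorem. Next, to bound $F(1)$ from below, I would plug any feasible $\mathbf{E}$ into the inner objective and note that $\textrm{Tr}(\mathbf{C}\mathbf{E})=\textrm{Tr}(\overline{\mathbf{H}}_C\overline{\mathbf{H}}_C^H)=\sum_a\lambda_a^2$ and $\|\mathbf{C}\mathbf{E}\|_F=\|\overline{\mathbf{H}}_C\overline{\mathbf{H}}_C^H\|_F=\sqrt{\sum_a\lambda_a^4}$ where $\lambda_a^2$ are the singular values of $\mathbf{C}\mathbf{E}$. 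The elementary inequality $\bigl(\sum_a\lambda_a^2\bigr)^2\ge\sum_a\lambda_a^4$ (since all cross terms $\lambda_a^2\lambda_b^2$ are nonnegative) gives $\textrm{Tr}(\mathbf{C}\mathbf{E})\ge\|\mathbf{C}\mathbf{E}\|_F$, hence $F(1)\ge 0$.

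For the upper endpoint, I would apply Cauchy--Schwarz on the same singular-value representation: $\sum_a\lambda_a^2\le\sqrt{r}\sqrt{\sum_a\lambda_a^4}$, where $r=\textrm{rank}(\mathbf{C}\mathbf{E})\le\textrm{rank}(\overline{\mathbf{H}}_C)=\min(KN,M,\tilde d_t,\tilde d_r)$ by Theorem \ref{theorem1} and the sub-multiplicativity of rank. Consequently, for every feasible $\mathbf{E}$, $\textrm{Tr}(\mathbf{C}\mathbf{E})-\sqrt{\textrm{rank}(\overline{\mathbf{H}}_C)}\,\|\mathbf{C}\mathbf{E}\|_F\le 0$, so taking the maximum yields $F\bigl(\sqrt{\textrm{rank}(\overline{\mathbf{H}}_C)}\bigr)\le 0$. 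With $F(1)\ge 0$, $F(\sqrt{\textrm{rank}(\overline{\mathbf{H}}_C)})\le 0$, and $F$ continuous, the intermediate value theorem supplies at least one $\zeta^{opt}\in[1,\sqrt{\textrm{rank}(\overline{\mathbf{H}}_C)}]$ with $F(\zeta^{opt})=0$; Lemma \ref{lemma4} then forces this $\zeta^{opt}$ to be unique.

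The main obstacle I anticipate is a bookkeeping issue rather than a deep one: verifying the two boundary inequalities with the correct rank bound. In particular, the Cauchy--Schwarz step at $\zeta=\sqrt{\textrm{rank}(\overline{\mathbf{H}}_C)}$ must use $\textrm{rank}(\mathbf{C}\mathbf{E})\le\textrm{rank}(\overline{\mathbf{H}}_C)$ uniformly over the feasible set, which follows because $\mathbf{C}\mathbf{E}$ is, up to the relaxation of the fixed-rank constraint, a Gram-type product whose nonzero singular values coincide with the squared singular values of $\overline{\mathbf{H}}_C$. Once this structural observation is made explicit, the rest is a direct chain of Cauchy--Schwarz plus IVT, with uniqueness inherited from the already-established strict monotonicity of $F$.
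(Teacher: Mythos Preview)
Your proposal is correct and follows essentially the same route as the paper: establish $F(1)\ge 0$ via $(\sum_a\lambda_a^2)^2\ge\sum_a\lambda_a^4$, establish $F(\sqrt{\textrm{rank}(\overline{\mathbf{H}}_C)})\le 0$ via Cauchy--Schwarz, and combine with the strict monotonicity of Lemma \ref{lemma4} to conclude existence and uniqueness of the root. The only difference is that you make the continuity/IVT step explicit (via Berge's theorem), whereas the paper leaves it implicit; your version is slightly more careful but not substantively different.
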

	\begin{proof}
		The proof is provided in Appendix \ref{proof of thm2}.
	\end{proof}
	
	From Theorem \ref{thm2}, we conclude that when $\zeta\in \left[1,\sqrt{\textrm{rank}\left(\overline{\mathbf{H}}_C\right)}\right]$, there is a unique $\zeta^{opt}$ that makes the optimal objective function of problem (\ref{edof2}) equal to 0. This zero root corresponds to the optimal objective function of the original problem (\ref{edof1}), which can be determined using the bisection method.
	
	After determining the optimal $\zeta^{opt}$ corresponding to ${\bf E}^{opt}$ from problem (\ref{edof2}), we recover the fixed rank constraint $\textrm{rank}\left({\bf E}\right)=KN$ to obtain the phase response of each element $\varphi_{r}^{{n,i}^{opt}}(t_k)$. Since ${\bf E}$ is composed of $KN$ block diagonal matrices, with each block being a rank-1 matrix,
	\begin{align}
		\notag
		{\bf E}=\textrm{blkdiag}\left(\widetilde{\bf{q}}_{r}^{1}(t_1)\widetilde{\bf{q}}_{r}^{1H}(t_1),\cdots, \widetilde{\bf{q}}_{r}^{N}(t_K)\widetilde{\bf{q}}_{r}^{NH}(t_K)\right),
	\end{align}
	by denoting ${\bf E}_{n,k}\triangleq\widetilde{\bf{q}}_{r}^{n}(t_k)\widetilde{\bf{q}}_{r}^{nH}(t_k),n\in \mathcal{N}, k\in \mathcal{K}$, we use gaussian random method to recover the rank-1 constraint of each block diagonal matrix ${\bf E}^{opt}_{n,k}$ of ${\bf {E}}^{opt}$ \cite{So}, i.e., ${{\bf E}}^{opt}_{n,k}=\boldsymbol{\vartheta}_{n,k}^{opt}\boldsymbol{\vartheta}_{n,k}^{optH}$, where $\boldsymbol{\vartheta}_{n,k}^{opt}$ is the optimal rank one solution recovered by Gaussian-random method that maximizes the objective function of problem (\ref{edof1}).
	
	Finally, we obtain the optimal phase of each element, i.e.,
	\begin{align}
		\label{optimal-phase}
		\varphi_{r}^{{n,i}^{opt}}(t_k)=\arg\left(\boldsymbol{\vartheta}_{n,k}^{opt}(i)\right), i\in \mathcal{I}.
	\end{align}
	
	The total algorithm to solve the problem (\ref{edof}) is summarized in Algorithm 1. The computational complexity of the SDR algorithm in each iteration is $\mathcal{O}\left\{(KNN_r)^3\right\}$.
	
	\begin{algorithm}[!t]
		\caption{ 	 Proposed algorithm to solve problem (\ref{edof}) }
		\begin{algorithmic}
			\State 1:\textbf{Initialization:} Set convergence precision $\epsilon=10^{-3}$. Set iteration index $\chi=0$; Initialize the upper bound and the lower bound of auxiliary parameter  $\zeta_u^{(0)}=\sqrt{\textrm{rank}\left(\overline{\mathbf{H}}_C\right)}$ and $\zeta_l^{(0)}=1$, respectively.
			\Repeat
			\State 2: Set $\zeta_{mid}^{(\chi)} =\frac{1}{2} \left(\zeta_u^{(\chi)}+\zeta_l^{(\chi)}\right)$;
			\State 3:With given $\zeta^{(\chi)}= \zeta_{mid}^{(\chi)}$, solve (\ref{edof2}) to obtian ${\bf E}^{(\chi)}$;
			\State 4: Check the positivity of $f\left({\bf E}^{(\chi)}\right)=\textrm{Tr}\left({\bf C}{\bf E}^{(\chi)}\right)-$\\
				\qquad$\zeta^{(\chi)}{\big{|}\big{|}{\bf C}{\bf E}^{(\chi)}\big{|}\big{|}_F}$;
			\State 5: If $f\left({\bf E}^{(\chi)}\right)\geq 0$, update $\zeta_l^{(\chi+1)} = \zeta_{mid}^{(\chi)}$;
			\State 6: If $f\left({\bf E}^{(\chi)}\right)< 0$, update $\zeta_u^{(\chi+1)} = \zeta_{mid}^{(\chi)}$;
			\State 7: Set $\chi\leftarrow \chi+1$;
			\Until{$\zeta_u^{(\chi)}-\zeta_l^{(\chi)}\leq \epsilon$;}
			\State 8: Set ${\bf E}^{opt}={\bf E}^{(\chi)}$, recover $\textrm{rank}\left({\bf E}^{opt}\right)=KN$;
			\State 9: Obtain phase response of each element by (\ref{optimal-phase});
			\State 10: Compute the optimized EDoF $\frac{\textrm{Tr}\left(\overline{\mathbf{H}}_C\overline{\mathbf{H}}_C^H\right)}{\big{|}\big{|}\overline{\mathbf{H}}_C\overline{\mathbf{H}}_C^H\big{|}\big{|}_F}$ of (\ref{edof}).
		\end{algorithmic}
	\end{algorithm}
	
	\subsubsection{Optimizing EDoF with Discrete Phase}
	
	When the phase response of metasurface elements is discretely phased, by setting the quantization bits of each element as $b$, $\mathcal{Q}\triangleq\left\{0,\frac{\pi}{2^{b-1}},\cdots, \frac{\pi(2^b-1)}{2^{b-1}} \right\}$ is the discrete set of adjustable phases of each element, the problem of maximizing channel capacity is expressed as
	\begin{align}
		\notag
		\max_{\varphi_{r}^{n,i'}(t_k)}&\quad \frac{\textrm{Tr}\left(\overline{\mathbf{H}}_C\overline{\mathbf{H}}_C^H\right)}{\big{|}\big{|}\overline{\mathbf{H}}_C\overline{\mathbf{H}}_C^H\big{|}\big{|}_F},\\
		\label{edof_dis}
		s.t. \quad &\quad  \varphi_{r}^{n,i'}(t_k)\in \mathcal{Q}, i\in \mathcal{I}, n\in \mathcal{N}, \forall k\in \mathcal{K},
	\end{align}
	To solve this non-convex problem, we first relax the discrete phase constraint $ \varphi_{r}^{n,i'}(t_k)\in \mathcal{Q}$ and convert it into a continuous phase constraint $ \varphi_{r}^{n,i}(t_k)\in (0,2\pi]$. This transformation aligns problem (\ref{edof_dis}) with problem (\ref{edof}), allowing us to solve for the optimal continuous phase of each element $\varphi_{r}^{{n,i}^{opt}}(t_k)\in (0,2\pi]$ by SDR method. Subsequently, we apply the principle of minimum mean square error to recover the discrete phase, yielding the optimal discrete phase for each element as
	\begin{align}
		 \varphi_{r}^{{n,i}^{opt'}}(t_k)=\arg\min\limits_{ \varphi_{r}^{n,i'}(t_k)\in \mathcal{Q}}| \varphi_{r}^{{n,i}^{opt'}}(t_k)- \varphi_{r}^{n,i'}(t_k)|.
  \end{align}
	
	\section{Simulation Results}
	
	This section presents simulation results to verify the performance of the DARISA MIMO system under various scenarios. We assume that each cluster follows the Cluster Delay Line (CDL) channel model in 3GPP TR 38.900.  In addition, all simulation results ignore the effect of mutual coupling characteristics on the performance, which needs to be investigated in the future.
	
		\begin{figure}
		\centering
		\includegraphics[width=3.8in]{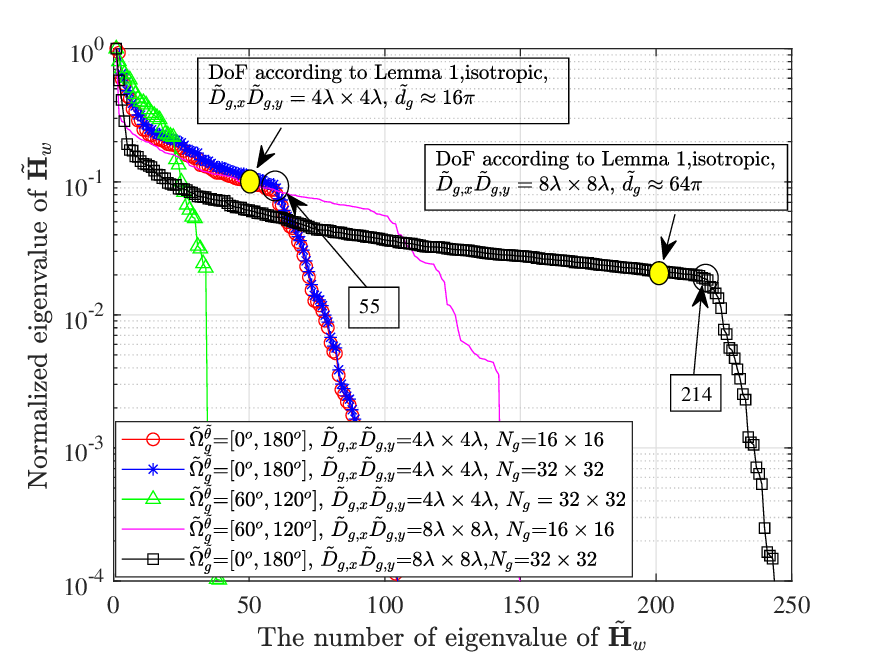}
		\caption{ The DoF of the MIMO array-scattering channel $\widetilde{\bf H}_{w}$ in (\ref{Scatter_MIMO_channel}) and Lemma \ref{lemma1} for different DARISA MIMO sizes and azimuth angular spreads.}
	\end{figure}
	
	\begin{figure}
		\centering
		\subfigure[ The eigenvalue of composite spatial-temporal channel $\overline{\bf H}_C$ in  (\ref{spatial-temporal-MIMO_channel}) with random-phase (RP) scheme and our proposed optimized-phase (OP) scheme.]{\includegraphics[width=3.8in]{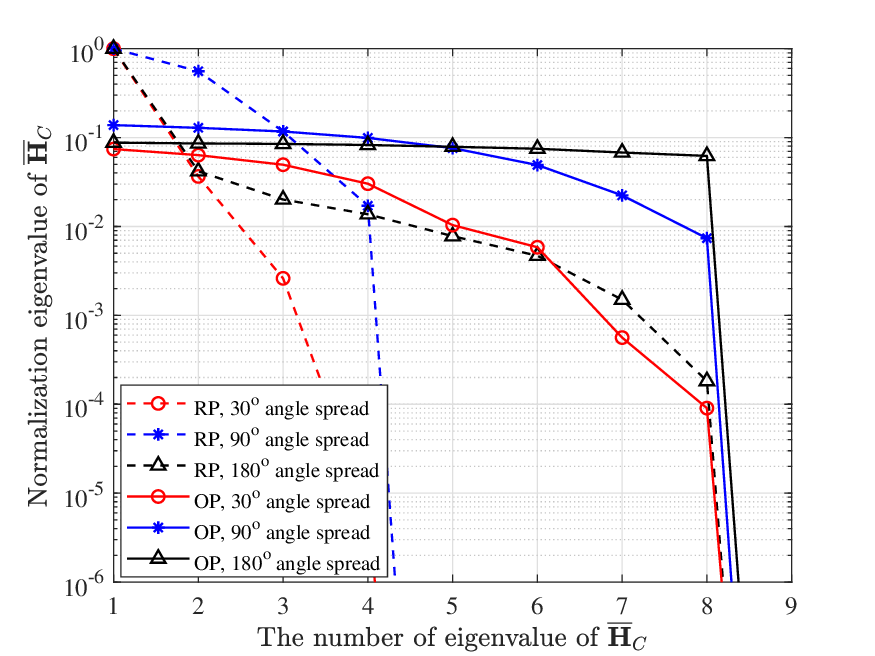}}
		\subfigure[ The capacity versus SNR with different phase schemes and angular spreads.]{\includegraphics[width=3.8in]{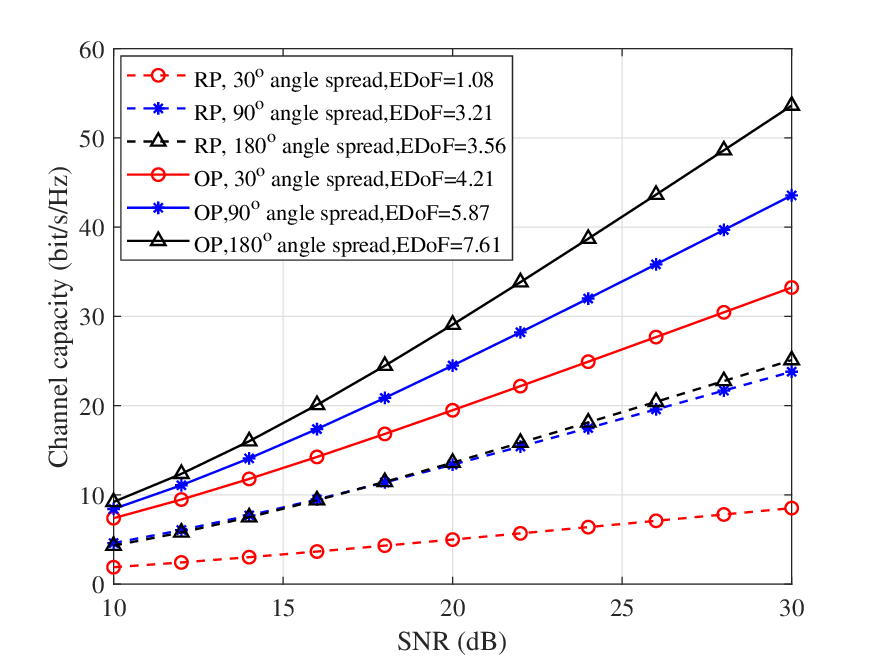}}
		\caption{ The eigenvalue and capacity with different phase shifts and angular spreads.}
	\end{figure}

	Fig. 4 illustrates the DoF of array-scattering channel $\widetilde{\bf H}_{w}$ (defined in (\ref{Scatter_MIMO_channel})) for different DARISA MIMO array sizes and azimuth angular spreads. The yellow circular marker denotes theoretical DoF values derived from Lemma 1 under isotropic scattering conditions. The remaining curves correspond to 3GPP TR 38.900 CDL channel model simulations \cite{TR}. First, we set the cluster parameters including the number and angles of clusters based on the 3GPP CDL model. Then,  the coefficients of the scattering response matrix $\widetilde{\bf{H}}_w$ in (\ref{Scatter_MIMO_channel}) are generated by utilizing these parameters. Finally, the singular value decomposition (SVD) is applied to $\widetilde{\bf{H}}_w$ to determine its eigenvalue. We assume that there is a single cluster in the scattering environment.
The center arriving angle of the cluster is set to $\tilde{\dot\theta}_{g}=\pi,g\in\left\{r,t\right\}$ with a fixed elevation angular spread of $180$ degrees (i.e., $\tilde{\phi}_{g}\in \tilde{\Omega}_{g}^{\tilde \phi}=\left[0,\pi\right]$). The channel coefficients of $\widetilde{\bf H}_{w}$ can be generated based on the CDL channel coefficient generation method. It is observed that the spatial DoF of the DARISA MIMO system increases with both the size of the DARISA array and the angular spread of the cluster. When the azimuth angular spread is set to $180^{\circ}$ (i.e., $c_{g,1}=c_{g,2}=1$), the array size is set to $\tilde{D}_{g,x}\tilde{D}_{g,y}=4\lambda\times 4\lambda$ or $8\lambda\times 8\lambda$, the DoF is equal to $55$ or $214$, which is consistent with the DoF theoretical analysis of Lemma 1 (i.e., $\tilde{d}_{g}\approx \pi\tilde{D}_{g,x}\tilde{D}_{g,y}$). In addition, for the two curves labeled ``$\tilde \Omega^{\tilde \theta}_{g} \in \left[0,180^{\circ}\right]$,  $\tilde{D}_{g,x}\tilde{D}_{g,y}=4\lambda\times4\lambda$, $N_g=16\times16$" and ``$\tilde \Omega^{\tilde \theta}_{g} \in \left[0,180^{\circ}\right]$,  $\tilde{D}_{g,x}\tilde{D}_{g,y}=4\lambda\times4\lambda$, $N_g=32\times 32$", the identical DoF values confirm that denser metasurface element deployment does not affect DoF when DARISA array size and angular spread remain constant.

	\begin{figure}
		\centering
		\begin{minipage}[t]{0.48\textwidth}
			\centering
			\includegraphics[width=3.8in]{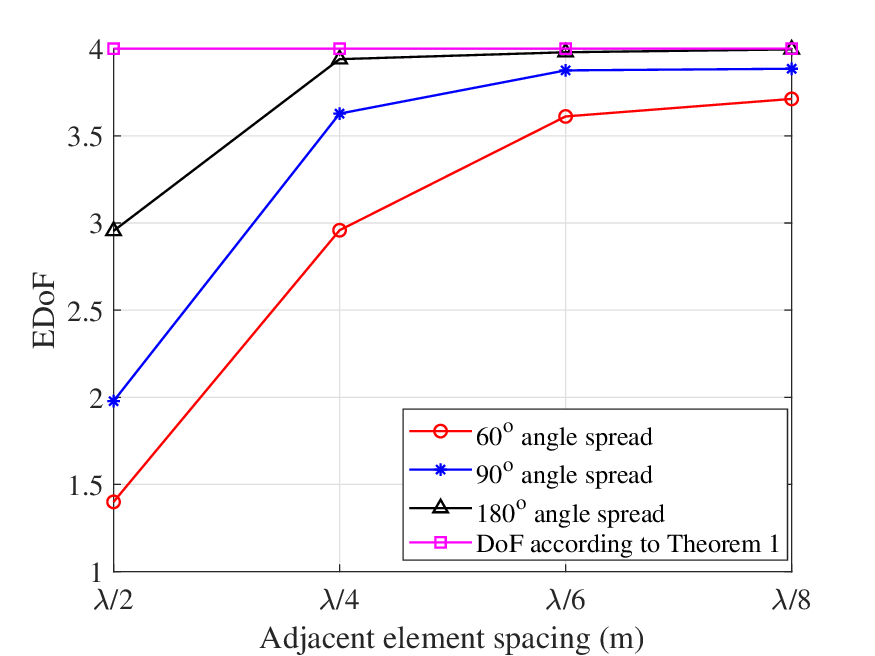}
			\caption{ The EDoF versus element spacing for different angular spreads by setting $M=4,N=2,K=2$, the size of each DARISA $\lambda\times \lambda$.}
		\end{minipage}
		\begin{minipage}[t]{0.48\textwidth}
			\centering
			\includegraphics[width=3.8in]{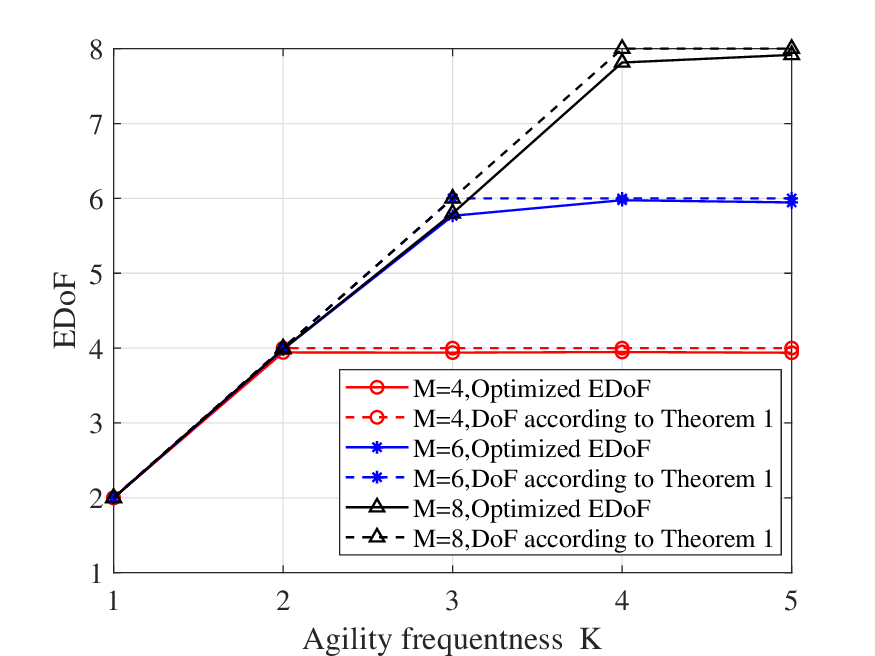}
			\caption{The EDoF versus agility frequentness $K$ for different $M$ by setting $N=2$, element spacing $\lambda/4$, and the size of each DARISA $\lambda\times \lambda$.}
		\end{minipage}
	\end{figure}
	
	Fig. 5 compares eigenvalues and channel capacity across different phase design schemes and cluster angular spreads for a randomly generated CDL channel, where $\tilde{\phi}_{g}\in \tilde{\Omega}_{g}^{\tilde \phi}=\left[0,\pi\right]$. The simulation curve generation method in Fig. 5(a) is the same as that in Fig. 4, including cluster parameter configuration, channel coefficient generation and channel matrix decomposition steps. When generating the coefficients of composite spatial-temporal channel matrix $\overline{\bf H}_C$ in (\ref{spatial-temporal-MIMO_channel}), the effects of $\overline{\bf{Q}}_{r}$ and $\overline{\bf{Q}}_{t}$
 need to be considered. We set the the number of cluster $\tilde{L}=1$, the agility frequentness $K=4$, the number of transmit and receive antennas $M=8, N=2$, the number of dense deployment elements of each DARISA $16\times16$, and the size of each DARISA ${2\lambda}\times{2\lambda}$. The random phase scheme serves as the benchmark. As observed in Fig. 5(a), the EDoF increases with angular spreads. This is due to the fact that the spatial DoF of $\overline{\bf H}_C$ in (\ref{spatial-temporal-MIMO_channel}) increases with the angular spread, so does the EDoF. While the number of eigenvalues remains the same for both the random phase scheme and our proposed optimized phase scheme, the phase responses of $\overline{\bf{Q}}_{r}$ in (\ref{spatial-temporal-MIMO_channel}) optimized by our proposed algorithm can make all singular values of $\overline{\mathbf{H}}_C$ approximately equal, leading to a significant enhancement in EDoF. As shown in Fig. 5(b), for azimuth angular spreads of $30^{\circ},90^{\circ}$ and $180^{\circ}$, the EDoFs increase from  $1.08$, $3.21$ and $3.56$ with random phases to $4.21$, $5.87$ and $7.61$ with optimized phases. However, the EDoF cannot exceed $\textrm{rank}\left(\overline{\mathbf{H}}_C\right)= \min\left(KN, M,\tilde{d}_t,\tilde{d}_r\right)$.

	Fig. 6 depicts the EDoF versus element spacing for different angular spreads with transmit-DARISAs $M=4$, receive-DARISAs $N=2$, $K=2$ agility frequentness, each DARISA sized at $\lambda\times \lambda$, and the number of cluster $\tilde{L}=1$. The purple square curve denotes analytical results derived from Theorem 1, while other curves represent simulations averaged over 100 random CDL channel realizations. An important observation is that maintaining a constant DARISA array size while deploying ``denser" elements can enhance the EDoF. Because increasing $N_r$ elevates the dimensions of $\overline{\bf{Q}}_{r}\in \mathbb{C}^{KNN_r\times KN}$ in (\ref{spatial-temporal-MIMO_channel}). Consequently, more phase responses of elements can be designed, enabling enhanced optimization of $\overline{\bf{Q}}_{r}$ to achieve more uniform singular value distributions in $\overline{\mathbf{H}}_C$. However, EDoF growth rate diminishes at smaller spacings due to: 1) Channel rank constraint: EDoF is fundamentally bounded by $\mathrm{rank}(\overline{\mathbf{H}}_C) = \min(KN, M, \tilde{d}_t, \tilde{d}_r)$. As spacing decreases, EDoF approaches this theoretical upper bound, yielding diminishing returns; 2) Optimization marginality: Incremental elements yield diminishing returns in beamforming flexibility and singular value equalization beyond critical density. Additionally, EDoF increases with angular spread, consistent with Fig. 5 observations.
	
	We show the EDoF versus agility frequentness $K$ for various numbers of transmit-DARISAs $M$ in an isotropic scattering environment in Fig. 7. Analytical results derived from Theorem 1 are denoted by dotted curves, while solid curves correspond to Monte Carlo simulations averaged over 100 random CDL channel realizations under isotropic scattering conditions. We set receive-DARISs $N=2$, $\lambda/4$ element spacing, and the size of each DARISA at $\lambda\times \lambda$. The results validate that the EDoF increases with $K$, as the dimensions of $\overline{\bf{Q}}_{r}\in \mathbb{C}^{KNN_r\times KN}$ in (\ref{spatial-temporal-MIMO_channel}) grow with $K$,
	This increase allows for more phase shifts to be artificially designed, leading to singular values of $\overline{\mathbf{H}}_C$ that are more closely aligned. Notably, EDoF asymptotically approaches but cannot surpass the fundamental limit $\textrm{rank}\left(\overline{\mathbf{H}}_C\right)= \min\left(KN, M,\tilde{d}_t,\tilde{d}_r\right)$, aligning with Theorem 1.
	
	\begin{figure}
		\centering
		{\includegraphics[width=3.8in]{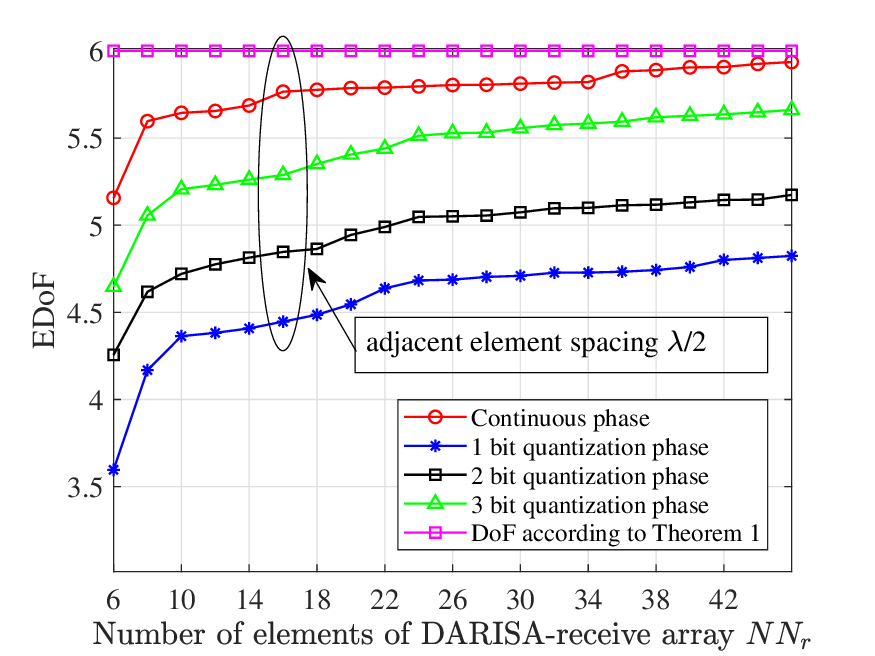}}
		\caption{The EDoF and DoF versus number of receive elements for different quantized phase bits by setting $M=6,N=2,K=3$, the size of receive DARISA array $2\lambda\times2\lambda$.}
	\end{figure}
	
	Fig. 8 illustrates the EDoF and DoF of composite spatial-temporal channel $\overline{\bf H}_C$  as functions of the number of receive metasurface elements under continuous and discrete phase responses. System parameters include
	 $M=6$ transmit-DARISAs,  $N=2$ receive-DARISAs, $K=3$ agility frequentness, and $2\lambda\times2\lambda$ array size of receive DARISA.  Analytical results from Theorem 1 are depicted by purple square markers, while simulation curves represent Monte Carlo averages over 100 random CDL channel realizations in an isotropic scattering environment. The results indicate that by fixing the DARISA array size, the EDoF increases with both quantized phase bits and the number of dense deployment elements. This occurs because: 1) increasing the number of dense deployment elements enhances the dimensions of the adjustable phase response, allowing for more phase shifts to be artificially designed; and 2) a higher phase quantization bits improves the accuracy of constructive superposition of $\overline{\bf H}_C$. Both factors contribute to making the channel well-conditioned, thereby improving the EDoF. These findings suggest that deploying a denser arrangement of elements can mitigate performance losses associated with lower phase accuracy.
	
	\section{Conclusion}
In this paper, we proposed a DARISA MIMO system that utilized the ability of metasurface elements to agilely adjust the phase response. We analyzed the DoFs of the DARISA MIMO system and derived an explicit relationship between DoFs and system parameters including agility frequentness, angular spread, DARISA array size, and the number of DARISAs. This finding suggested that it was possible to increase the DoFs when the number of receive-DARISAs was less than the number of transmit-DARISAs. Then, the EDoF was used to measure capacity and we optimized the agile phase responses of metasurface elements to maximize the EDoF. To solve this non-convex problem, fractional programming and SDR algorithms were developed. Simulation results validated the effectiveness of our proposed method, demonstrating that denser deployment of DARISA elements enhanced the EDoF and that a more densely arranged deployment of elements could compensate for the loss of communication performance due to low phase quantization accuracy. The DARISA MIMO system showed significant potential for broad applications. For examples, 1) environment sensing and channel estimation leveraging DARISA’s dynamic reconfigurability; 2) adaptive beamforming optimization in multi-DARISA networks to maximize the sum rate; 3) artificial channel randomness generation through agile adjustment of DARISA phases to enhance physical layer security and so on.

	{\appendices
		\section{Proof of Lemme 1} \label{proof1}
	{ The proof comprises two main steps: 1) establishing the equivalence between the space-domain channel ($\tilde{a}$ superscripted notation) and the wavenumber-domain channel ($\breve{a}$ superscripted notation); 2) analyzing the DoF of the wavenumber-domain channel in both isotropic and non-isotropic scattering environments. Identical notations in $\tilde{a}$ and $\breve{a}$ retain consistent physical interpretations, with superscripts exclusively distinguishing their domain-specific formulations.
		
		\subsection{Equivalent Channel from Space to Wavenumber Domain}
		
		According to (\ref{Scatter_MIMO_channel}), the DARISA MIMO spatial array-scattering response $\widetilde{\bf{H}}_w\in{{\mathbb{C}}^{NN_r\times MN_t}}$ is expressed as
	\begin{align}
		\tag{A-1}\label{Scatter_MIMO}
		\widetilde{\bf{H}}_w=\frac{1}{\sqrt{\tilde{L}}}\sum\limits_{\tilde{l}=1}^{\tilde{L}}\widetilde{\bf A}_{r}^{\tilde{l}H}\widetilde{\bf H}_{a}^{\tilde{l}}\widetilde{\bf A}_{t}^{\tilde{l}}=\frac{1}{\sqrt{\tilde L}}\sum\limits_{\tilde l=1}^{\tilde L}\widetilde{\bf H}_{w}^{\tilde l},
	\end{align}
	where the meaning of all notations refers to (\ref{Scatter_MIMO_channel}). The spatial array-scattering channel response $\tilde{h}_{uv},u\in\mathcal{U}=\left\{1,\cdots,NN_r\right\}, v\in\mathcal{V}=\left\{1,\cdots,MN_t\right\}$ between $u$-th receive and $v$-th transmit element is expressed as
		\begin{align}
			\notag
			\tilde{h}_{uv}=\frac{1}{\sqrt{\tilde L}}\sum_{\tilde l=1}^{\tilde L}\  \iint\limits_{{{\tilde \Omega}}_{t}^{\tilde l} \times {{\tilde \Omega}}_{r}^{\tilde l} }& {a}_{r}^{u}\left( {\boldsymbol{\alpha} }^{\tilde l}_{r}\right){\tilde H}_{a}\left( {\boldsymbol{\alpha}}^{\tilde l}_{r}, {\boldsymbol{\alpha} }^{\tilde l}_t\right)\\
		\tag{A-2}	\label{space-countinue}
			&	{a}_{t}^{v*}\left({\boldsymbol{\alpha}}^{\tilde l}_t \right)d {\boldsymbol{\alpha}}^{\tilde l}_t d,
		\end{align}
		where ${\tilde {\Omega}}_{g}^{\tilde l}= \Omega_{g,\tilde{\theta}}^{\tilde l}\times \Omega_{g,\tilde{\phi}}^{\tilde l},g\in\left\{r,t\right\}$. The meaning of other notations refers to (2).
		
		Based on the works \cite{Pizzo4}\cite{Zhang23}, for an electromagnetic wave with spatial azimuth angle $\phi$ and elevation angle $\theta$,
		the wave propagation direction in the wavenumber domain is a scaled version of the direction in the space domain, i.e.,  $\phi$ and $\theta$ can be mapped to the wavenumber domain as
		\begin{align}
			\notag
				\kappa_x=\beta \sin \theta \cos \phi,
				\kappa_y=\beta \sin \theta \sin \phi, 
				\kappa_z=\beta \cos \theta,
		\end{align}
		where $\beta=\frac{2\pi}{\lambda}$ is the wavenumber, $	\kappa_x, \kappa_y$ and $\kappa_z$ denote the
		wavenumber in $x, y$ and $z$ directions, respectively. Thus, by defining the space-domain propagation direction sets as ${\tilde {\Omega}}_{g}= {\tilde {\Omega}}_{g}^{1}\bigcup,\cdots,\bigcup {\tilde {\Omega}}_{g}^{\tilde L},g\in\left\{r,t\right\}$ and the wavenumber-domain departure/arrival direction vector as $\boldsymbol{\kappa}_g=[\kappa_{g,x},  \kappa_{g,y},\kappa_{g,z}]^T$,  the departure/arrival directions set in wavenumber domain $\mathcal{D}(\boldsymbol{\kappa}_g) $
	 mapped from ${\tilde {\Omega}}_{g}$ is expressed as
		\begin{align}
			\notag
		\mathcal{D}({\boldsymbol \kappa}_g)=\{({\kappa}_{g,x},{\kappa}_{g,y},{{\kappa}_{g,y}})\in {{\mathbb{R}}^{3}}:\kappa_{g,x}^{2}+\kappa_{g,y}^{2}+\kappa_{g,y}^{2}= {{\beta }^{2}}\}.
		\end{align}

		Since $\kappa_{g,x}^{2}+\kappa_{g,y}^{2}+\kappa_{g,y}^{2}= {{\beta }^{2}}$, the departure/arrival direction set in the wavenumber domain
		is limited to the support $$\mathcal{D}({\boldsymbol \kappa}_g)=\{({\kappa}_{g,x},{\kappa}_{g,y})\in {{\mathbb{R}}^{2}}:\kappa_{g,x}^{2}+\kappa_{g,y}^{2}\leq {{\beta }^{2}}\},\boldsymbol{\kappa}_g \in \mathcal{D}(\boldsymbol{\kappa}_g).$$
		
		Therefore, the spatial array-scattering channel (\ref{space-countinue}) can be equivalently mapped to the wavenumber domain as
		\begin{align}
			\notag
		&	\tilde{h}_{uv}= \iint\limits_{ \mathcal{D}(\boldsymbol{\kappa}_r) \times \mathcal{D}(\boldsymbol{\kappa}_t)} \breve{a}_{r}^{u}\left( {\boldsymbol{\kappa} }_{r}\right){\breve H}_{a}\left( {\boldsymbol{\kappa}}_{r}, {\boldsymbol{\kappa} }_t\right)\breve{a}_{t}^{v*}\left({\boldsymbol{\kappa}}_t \right)d {\boldsymbol{\kappa}}_t d {\boldsymbol{\kappa}}_r,\\
			\tag{A-3}\label{wavenumber-con}
		&	u\in\mathcal{U}=\left\{1,\cdots,NN_r\right\}, v\in\mathcal{V}=\left\{1,\cdots,MN_t\right\},
		\end{align}
		where $\breve{a}_{t}^{v}\left({\boldsymbol{\kappa}}_t \right)={e}^{\jmath{\boldsymbol{\kappa}}_t^{T}{\bf{t}}_v}$ is the \emph{transmit response} that maps the impulsive excitation current ${\bf t}_v$ at the location of $v$-th element ${\bf{t}}_v$ to the departure propagation direction ${\boldsymbol{\kappa}}_t$,  $\breve{a}_{r}^{u}\left({\boldsymbol{\kappa}}_r \right)={e}^{\jmath{\boldsymbol{\kappa}}_r^{T}{\bf{r}}_u}$ is the \emph{receive response} that maps the arrival propagation direction ${\boldsymbol{\kappa}}_r$ to the
		induced current at the location of $u$-th element ${\bf{r}}_u$, ${\breve H}_{a}\left( {\boldsymbol{\kappa}}_{r}, {\boldsymbol{\kappa} }_t\right)\sim \mathcal{CN}(0, {\breve \sigma}^2\left( {\boldsymbol{\kappa}}_{r}, {\boldsymbol{\kappa} }_t\right))$ is the \emph{angular response} that maps departure direction ${\boldsymbol{\kappa} }_t$ to arrival direction ${\boldsymbol{\kappa}}_{r}$, ${\breve \sigma}^2\left( {\boldsymbol{\kappa}}_{r}, {\boldsymbol{\kappa} }_t\right)$ represents the power of angular response.
		
		Then, based on [11, Theorem 2], the above continuous plane wave channel (\ref{wavenumber-con}) is a bandwidth-constrained channel $|\mathcal{D}({\boldsymbol \kappa}_g)|=\pi\beta^2,g\in\left\{r,t\right\}$. When the sampling interval satisfies the Nyquist condition, this continuous channel can be recovered from a finite number of discrete samples. Sampling $\tilde{h}_{uv}$ at intervals of $\Delta_{{\boldsymbol{\kappa}}_{g,x}}=\frac{2\pi}{{\tilde D }_{g,x}},\Delta_{{\boldsymbol{\kappa}}_{g,y}}=\frac{2\pi}{{\tilde D }_{g,y}}$ in the wavenumber domain, (\ref{wavenumber-con}) is transformed as
		\begin{align}
			\notag
			\tilde{h}_{uv}&= \sum\nolimits_{(\breve{p}_{r,x},\breve{p}_{r,y})\in \breve{\mathcal{E}_{r}}} \sum\nolimits_{(\breve{p}_{t,x},\breve{p}_{t,y})\in \breve{\mathcal{E}_{t}}}\breve{a}_{r}^{u}\left( \breve{p}_{r,x},\breve{p}_{r,y}\right)
			\\
			\tag{A-4}	\label{wavenumber-dis}
			& \qquad \qquad  {\breve H}_{a}\left(\breve{p}_{r,x},\breve{p}_{r,y},\breve{p}_{t,x},\breve{p}_{t,y}\right)\breve{a}_{t}^{v*}\left(\breve{p}_{t,x},\breve{p}_{t,y}\right),
		\end{align}
		where
		$$
		\breve{\mathcal{E}_{g}}=\left\{ (\breve{p}_{g,x},\breve{p}_{g,y})\in {\mathbb{Z}}^{2}:\left(\frac{\breve{p}_{g,x}}{ {\breve D}_{g,x}}\right)^2+\left(\frac{\breve{p}_{g,y}}{{\breve D}_{g,y}}\right)^{2}\le 1\right\},
		$$ is the discrete wavenumber departure/arrival directions set, $\breve{a}_{r}^{u}\left( \breve{p}_{r,x},\breve{p}_{r,y}\right)=e^{\jmath\left(\frac{2 \pi \breve{p}_{r,x}}{\breve{D}_{r,x}} r_{x}^{u}+\frac{2 \pi\breve{p}_{r,y}}{\breve{D}_{r,y}} r_{y}^{u}\right)}$ is recieve response, $\breve{a}_{t}^{v}\left( \breve{p}_{t,x},\breve{p}_{t,y}\right)=e^{\jmath\left(\frac{2 \pi \breve{p}_{t,x}}{\breve{D}_{t,x}} t_{x}^{v}+\frac{2 \pi\breve{p}_{t,y}}{\breve{D}_{t,y}} t_{y}^{v}\right)}$ is transmit response, ${\breve H}_{a}\left(\breve{p}_{r,x},\breve{p}_{r,y},\breve{p}_{t,x},\breve{p}_{t,y}\right)\sim \mathcal{CN}(0, {\breve \sigma}^2\left(\breve{p}_{r,x},\breve{p}_{r,y},\breve{p}_{t,x},\breve{p}_{t,y}\right))$ is the angular response, where
	\begin{align}
		\notag
		 &{\breve \sigma}^2\left(\breve{p}_{r,x},\breve{p}_{r,y},\breve{p}_{t,x},\breve{p}_{t,y}\right)=\\
		 \tag{A-5}\label{power}
		 &\iiiint\limits_{\breve{W}}{\breve S} \left(\breve{p}_{r,x},\breve{p}_{r,y},\breve{p}_{t,x},\breve{p}_{t,y}\right)d\breve{p}_{r,x}d\breve{p}_{r,y} d\breve{p}_{t,x} d\breve{p}_{t,y},
	\end{align}
$	{\breve S} \left(\breve{p}_{r,x},\breve{p}_{r,y},\breve{p}_{t,x},\breve{p}_{t,y}\right)$ represents the power spectral density in wavenumber domain, the integration region $\breve{W}$ is expressed as
	\begin{align}
		\notag
	\breve{W}={{W}_{r}(\breve{p}_{r,x},\breve{p}_{r,y})\times {W}_{t}(\breve{p}_{t,x},\breve{p}_{t,y})},
		\end{align}
	where ${{W}_{g}}(\breve{p}_{g,x},\breve{p}_{g,y}),g\in\left\{r,t\right\}$ is expressed as
		\begin{align}
	\notag
	&	\left\{\left[\frac{2\pi \breve{p}_{g,x}}{{\breve D}_{g,x}},\frac{2\pi \left(\breve{p}_{g,x}+1\right)}{{\breve D}_{g,x}}\right]\times \left[\frac{2\pi \breve{p}_{g,y}}{{\breve D}_{g,y}},\frac{2\pi \left(\breve{p}_{g,y}+1\right)}{{\breve D}_{g,y}}\right]\right\}.
		\end{align}
Therefore,  $\widetilde{\bf H}_{w}\in \mathbb{C}^{NN_r \times MN_t }$ in (\ref{Scatter_MIMO}) can be equivalently transformed in the wavenumber domain \cite{Pizzo2}, i.e.,
	\begin{align}
		\tag{A-6}\label{wave-MIMO}
		\breve{\bf H}_{w} = \sqrt{NN_rMN_t} \breve{\bf A}_r  \breve{\bf H}_a \breve{\bf A}_t^H,
	\end{align}
where $\breve{\bf A}_r \in \mathbb{C}^{NN_r \times \breve{d}_r}$ and $\breve{\bf A}_t \in \mathbb{C}^{MN_t \times \breve{d}_t}$ collect column vectors ${\bf \breve{a}}_{r}\left(\breve{p}_{r,x},\breve{p}_{r,y}\right) \in \mathbb{C}^{NN_r \times 1}$ and ${\bf \breve{a}}_{t}\left(\breve{p}_{t,x},\breve{p}_{t,y}\right) \in \mathbb{C}^{MN_t \times 1}$ which are the normalized array responses with entries $\left[{\bf \breve{a}}_{r}\left(\breve{p}_{r,x},\breve{p}_{r,y}\right) \right]_u=\frac{1}{\sqrt{NN_r}}{\breve{a}}^{u}_{r}\left(\breve{p}_{r,x},\breve{p}_{r,y}\right), u\in\mathcal{U},$ and $\left[{\bf \breve{a}}_{t}\left(\breve{p}_{t,x},\breve{p}_{t,y}\right) \right]_v=\frac{1}{\sqrt{MN_t}}{\breve{a}}^{v}_{t}\left(\breve{p}_{t,x},\breve{p}_{t,y}\right), v\in\mathcal{V}$. These are semi-unitary matrices, i.e., $\breve{\bf A}_r^H\breve{\bf A}_r={\bf I}_{\breve{d}_r}, \breve{\bf A}_t^H\breve{\bf A}_t={\bf I}_{\breve{d}_t}$. $\breve{d}_g, g\in\left\{r,t\right\}$ is the cardinality $|\breve{\mathcal{E}_{g}}|$ of the departure/arrival directions in the wavenumber domain. $\breve{\bf H}_a \in \mathbb{C}^{\breve{d}_r \times \breve{d}_t}$ is the angular random matrix collecting ${\breve H}_{a}\left(\breve{p}_{r,x},\breve{p}_{r,y},\breve{p}_{t,x},\breve{p}_{t,y}\right)$.
	
Based on the above analysis, we establish the equivalence of spatial array-scattering response $\widetilde{\bf H}_{w}$ in (\ref{Scatter_MIMO}) and the wavenumber array-angular response  $\breve{\bf H}_{w}$ in  (\ref{wave-MIMO}). This implies that $rank\left(\widetilde{\bf H}_{w}\right)=rank\left(\breve{\bf H}_{w}\right)$. Next, we analyze $rank\left(\breve{\bf H}_{w}\right)$.
	
\subsection{ DoF Analysis in Isotropic and Non-isotropic Scattering}
	First, $\breve{\bf H}_{w}$ in (\ref{wave-MIMO}) is re-expressed as
	\begin{align}
		\tag{A-7}\label{wave-MIMO-further}
		\breve{\bf H}_{w} =\sqrt{NN_rMN_t}\breve{\bf A}_r  \left(\breve{\boldsymbol{\Sigma}}\odot\breve{\boldsymbol{\Xi}}\right) \breve{\bf A}_t^H,
	\end{align}
where $\breve{\boldsymbol{\Sigma}}\in \mathbb{C}^{\breve{d}_r \times \breve{d}_t}$ represents the matrix consisting of the standard deviations ${\breve \sigma}\left(\breve{p}_{r,x},\breve{p}_{r,y},\breve{p}_{t,x},\breve{p}_{t,y}\right)$ of angular response ${\breve H}_{a}\left(\breve{p}_{r,x},\breve{p}_{r,y},\breve{p}_{t,x},\breve{p}_{t,y}\right)$ in all wavenumber departure/arrival directions. $\breve{\boldsymbol{\Xi}}\in \mathbb{C}^{\breve{d}_r \times \breve{d}_t}\sim \mathcal{CN}\left({\bf 0},{\bf I}_{\breve{d}_r\breve{d}_t}\right)$ represents the channel randomness.

Since $\breve{\bf A}_r^H\breve{\bf A}_r={\bf I}_{\breve{d}_r}, \breve{\bf A}_t^H\breve{\bf A}_t={\bf I}_{\breve{d}_t}$, the non-zero eigenvalues of ${{\bf \breve{H}}_w}$ are determined by $\breve{\boldsymbol{\Sigma}}$ and $\breve {\boldsymbol{\Xi}}$. Therefore, the number of non-zero eigenvalues of $\breve{\bf H}_{w}\in \mathbb{C}^{NN_r \times MN_t }$ and $\breve{\bf H}_{a}=\left(\breve{\boldsymbol{\Sigma}}\odot\breve{\boldsymbol{\Xi}}\right) \in \mathbb{C}^{\breve{d}_r\times \breve{d}_t }$ are identical, so the ranks of them are identical, i.e.,
	\begin{align}
	\tag{A-8}\label{rank}
	rank\left({\breve{\bf H}_{w}}\right) =	rank\left({\breve{\bf H}_{a}}\right).
\end{align}
The rank of $\breve{\bf H}_{a}$  is determined by the number of non-zero rows $\breve{d}_r^{'}$ and non-zero columns $\breve{d}_t^{'}$
, where  $rank\left({\breve{\bf H}_{a}}\right)=\min\left\{\breve{d}_r^{'},\breve{d}_t^{'}\right\}$.

According to \cite{Pizzo2}, $\breve{d}_r^{'}$ and $\breve{d}_t^{'}$ are jointly determined by the array size and wireless scattering environment. The array size determine the cardinality $\breve{d}_g=|\breve{\mathcal{E}_{g}}|$ of the sets $(\breve{p}_{g,x},\breve{p}_{g,y})\in \breve{\mathcal{E}_{g}}$, i.e.,
	\begin{align}
	\tag{A-9}\label{set}
	\breve{d}_g=|\breve{\mathcal{E}_{g}}|\approx{\pi}\tilde{D}_{g,x} \tilde{D}_{g,y}, g\in\left\{r,t\right\}.
\end{align}
The wireless scattering environment governs the distrbution of ${\breve \sigma}^2\left(\breve{p}_{r,x},\breve{p}_{r,y},\breve{p}_{t,x},\breve{p}_{t,y}\right)$. The number of path clusters and their angular spreads directly influence the distribution of ${\breve \sigma}^2\left(\breve{p}_{r,x},\breve{p}_{r,y},\breve{p}_{t,x},\breve{p}_{t,y}\right)$.  Since the power spectral density in the wavenumber domain (\ref{power}) is a nonlinear function of the number of path clusters and angular spreads, a closed-form solution for $\breve{d}_r^{'}$ and $\breve{d}_t^{'}$ in terms of these variables cannot be directly derived. To further analyze $\breve{d}_r^{'}$ and $\breve{d}_t^{'}$, the isotropic and non-isotropic scattering environments are considered, respectively.

1) In the isotropic environment with rich scattering, signals are incident from any direction and randomly scattered to any direction. Therefore, the angular power of any propagation direction in the wavenumber domain ${\breve \sigma}^2\left(\breve{p}_{r,x},\breve{p}_{r,y},\breve{p}_{t,x},\breve{p}_{t,y}\right)$ is non-zero. This implise that the $\breve{d}_r^{'}$ and $\breve{d}_t^{'}$ are only determined by the array size, i.e.,
	\begin{align}
		\tag{A-10}
		\breve{d}_g^{'}=\breve{d}_g\approx \pi\tilde{D}_{g,x} \tilde{D}_{g,y}, g\in\left\{r,t\right\}.
\end{align}
Therefore, the rank of $\breve{\bf H}_{a}$ is expressed as
	\begin{align}
\tag{A-11} \label{iso-dof}
	rank\left({\breve{\bf H}_{a}}\right)= \min\left\{\breve{d}_r,\breve{d}_t\right\}.
\end{align}

\begin{figure}[!t]
	\centerline{\includegraphics[width=3.4in]{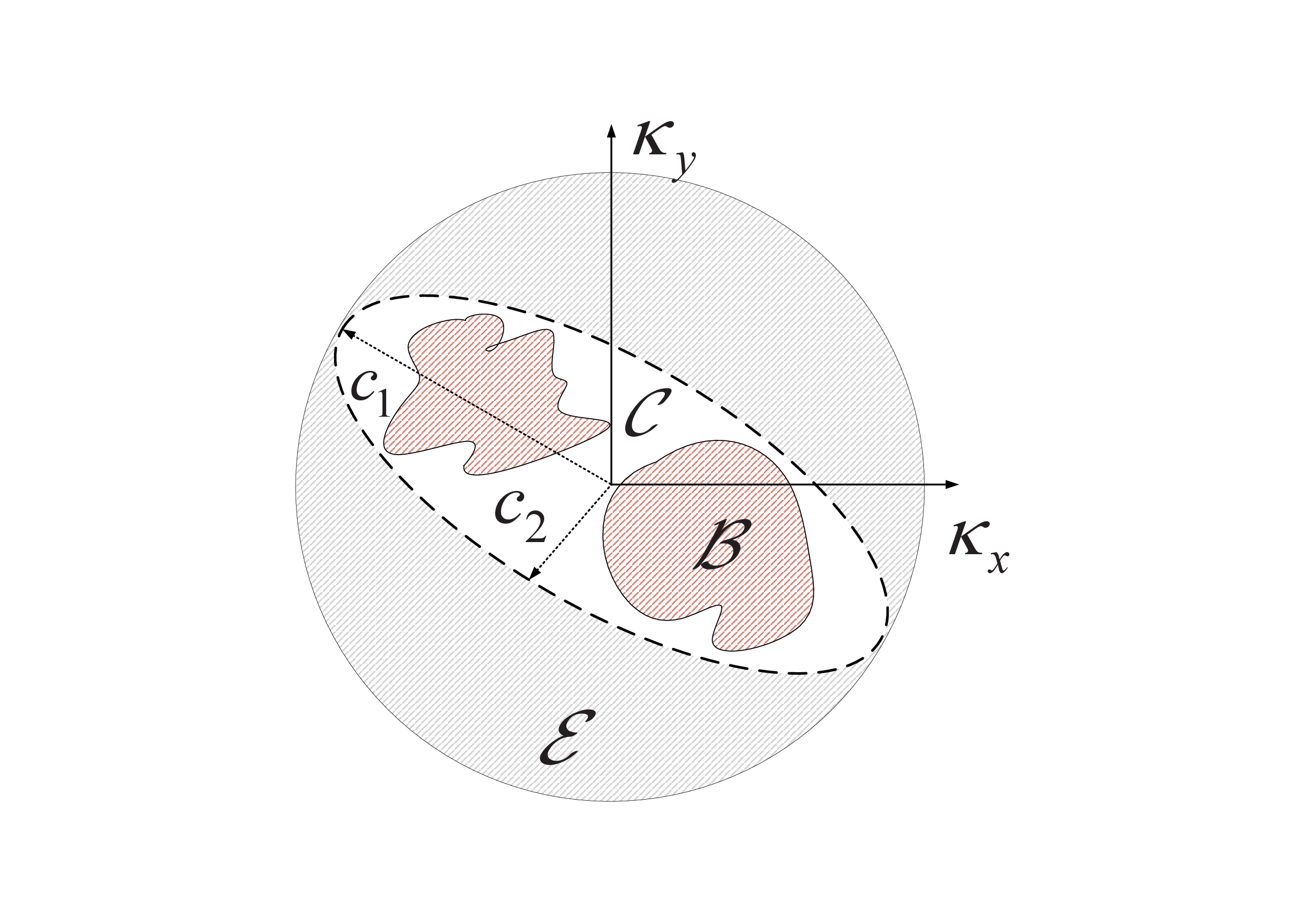}}
	\caption{Propagation in isotropic and non-isotropic environments.}
	\vspace{-0.6cm}
\end{figure}

2) In the non-isotropic scattering environment,  according to \cite{Pizzo3}, the blockage effect in a non-isotropic scattering environment restricts the electromagnetic wave propagation directions in the wavenumber domain to a smaller support $ \breve{\mathcal{B}_{g}}	\subseteq \breve{\mathcal{E}_{g}}, g\in\left\{r,t\right\}$, as illustrated in Fig. 9. Consequently, the scattering mechanism is transformed into another filtering operation based on migration filtering. However, deriving the Nyquist period matrix under arbitrary non-isotropic conditions is typically intractable due to its complexity. To address this challenge, a suboptimal strategy can be pursued by embedding $\breve{\mathcal{B}_{g}}$ into a larger and connected support $\breve{\mathcal{B}_{g}}\subseteq\breve{\mathcal{C}_{g}}\subseteq \breve{\mathcal{E}_{g}}$ [14, Section IV.C] depicted in Fig. 9, i.e.,
$$
\breve{\mathcal{C}_{g}}=\left\{ (\breve{p}_{g,x},\breve{p}_{g,y}):\left(\frac{\breve{p}_{g,x}}{ {{{c}}_{g,1}\breve D}_{g,x}}\right)^2+\left(\frac{\breve{p}_{g,y}}{{{c}}_{g,2}{\breve D}_{g,y}}\right)^{2}\le 1\right\},
$$
where $0\leq {c}_{g,1}\leq 1$ and $0\leq {c}_{g,2}\leq 1$ represent the semi-major and semi-minor axes, respectively, of the tightest ellipse encompassing all space-domain clusters projected onto the wavenumber domain, which are proportional to the cluster angle spread.

According to [14, Section V], the upper bound of $\breve{d}_r^{'}$ and $\breve{d}_t^{'}$ is expressed as
	\begin{align}
		\tag{A-12}
	\breve{d}_g^{'}\approx{c}_{g,1}{c}_{g,2}\pi\tilde{D}_{g,x} \tilde{D}_{g,y}, g\in\left\{r,t\right\}.
\end{align}
Therefore, the rank of $\breve{\bf H}_{a}$ is expressed as
\begin{align}
		\tag{A-13}\label{noniso-dof}
	rank\left({\breve{\bf H}_{a}}\right)= \min\left\{\breve{d}_r^{'},\breve{d}_t^{'}\right\}.
\end{align}

Since the spatial array-scattering response $\widetilde{\bf H}_{w}$ in (\ref{Scatter_MIMO}) and the wavenumber array-anguler response  $\breve{\bf H}_{w}$ in  (\ref{wave-MIMO}) are equivalent, i.e., $rank\left(\widetilde{\bf H}_{w}\right)=rank\left(\breve{\bf H}_{w}\right)$, combined consideration of (\ref{rank}), i.e., $rank\left(\breve{\bf H}_{w}\right)=rank\left(\breve{\bf H}_{a}\right)$, the DoF of $\widetilde{\bf H}_{w}$ can be futher expressed as (the DoF is the rank of channel matrix)
 \begin{align}
 \tag{A-14}	\label{space-dof}
 rank\left(\widetilde{\bf H}_{w}\right)= \min\left\{\tilde{d}_r,\tilde{d}_t\right\},
 \end{align}
 where $\tilde{d}_g\approx{c}_{g,1}{c}_{g,2}\pi\tilde{D}_{g,x} \tilde{D}_{g,y}, {c}_{g,1},{c}_{g,2}\in\left[0, 1\right], g\in\left\{r,t\right\}, $
${c}_{g,1}$ and ${c}_{g,2}$ is determined by the number of clusters and their angular spreads. When the scattering environment is isotropic, ${c}_{g,1}={c}_{g,2}=1$.}
			
		\section{ Proof of Theorem 1}\label{proof-of-theorem1}
		According to Lemma 1 that $\textrm{rank}\left(\widetilde{\bf H}_w\right)=\min\left(\tilde{d}_t,\tilde{d}_r\right)$, we obtain $\textrm{rank}\left(\overline{\bf H}_w\right)=\min\left(K\tilde{d}_t,K\tilde{d}_r\right)$. From matrix multiplication, we know that $\textrm{rank}\left(\overline{\bf H}_C\right)\leq \min\left(\textrm{rank}(\overline{\bf H}_w),\textrm{rank}\left({\overline{\bf{Q}}}_{r}\right),\textrm{rank}\left({\overline{\bf{Q}}}_{t}\right)\right)\leq \min\left(KN,M,K\tilde{d}_t,K\tilde{d}_r\right)$. Since $\textrm{rank}\left(\overline{\bf H}_C\overline{\bf H}_C^H\right)=\textrm{rank}\left(\overline{\bf H}_C^H \overline{\bf H}_C\right)=\textrm{rank}\left(\overline{\bf H}_C\right)$, analyzing the rank of $\overline{\bf H}_C \overline{\bf H}_C^H$ is equivalent to analysing the rank of $\overline{\bf H}_C$. ${\overline{\bf{Q}}}_{r}$ and ${\overline{\bf{Q}}}_{t}$ can be artificially designed to be column-full rank, \begin{align}
			\notag
			&{\overline{\bf{Q}}}_{r}{\overline{\bf{Q}}}_{r}^H={\bf U}_r\textnormal{diag}\left\{{{\boldsymbol{\Lambda}}_r},{\bf 0}_{KNN_r-KN}\right\}{\bf V}_r^H,\\
			\notag
			&{\overline{\bf{Q}}}_{t}{\overline{\bf{Q}}}_{t}^H={{\bf U}_t}\textnormal{diag}\left\{{{\boldsymbol{\Lambda }}_t},{\bf 0}_{KMN_t-M}\right\}{\bf V}_t^H,
		\end{align}
		where ${\bf U}_t\in \mathbb{C}^{KMN_t\times KMN_t},{\bf U}_r\in \mathbb{C}^{KNN_r\times KNN_r},{\bf V}_t\in \mathbb{C}^{KMN_t\times KMN_t}$, and ${\bf V}_r\in \mathbb{C}^{KNN_r\times KNN_r}$ are unitary matrices, ${\boldsymbol{\Lambda}}_t\in \mathbb{C}^{M\times M},{\boldsymbol{\Lambda}}_r\in \mathbb{C}^{KN\times KN}$ are full rank diagonal matrices, so we obtain
		\begin{align}
			\notag
			\overline{\bf H}_C \overline{\bf H}_C^H&={\overline{\bf{Q}}}_{r}^H{\overline{\bf H}_w} {{\bf U}_t}\textnormal{diag}\left\{{{\boldsymbol{\Lambda }}_t},{\bf 0}_{KMN_t-M}\right\}{\bf V}_t^H{\overline{\bf H}_w^H}{\overline{\bf{Q}}}_{r},\\
			\notag
			\overline{\bf H}_C^H \overline{\bf H}_C&={\overline{\bf{Q}}}_{t}^H{\overline{\bf H}_w^H} {\bf U}_r\textnormal{diag}\left\{{{\boldsymbol{\Lambda}}_r},{\bf 0}_{KNN_r-KN}\right\}{\bf V}_r^H {\overline{\bf H}_w}{\overline{\bf{Q}}}_{t}.
		\end{align}
		Therefore, by designing ${\overline{\bf{Q}}}_{r}$ and ${\overline{\bf{Q}}}_{t}$ to have full column rank, we  form a block matrix consisting of any
		$M$ columns of $\overline{\bf H}_w$ or any $KN$ rows of $\overline{\bf H}_w$. Thus, the rank of the matrix $\overline{\bf H }_C$ must satisfy $\textrm{rank}\left(\overline{\bf H}_C\right)\leq \min\left(KN,M \right)$.
		
		The rank of any $M$-column and $KN$-row block matrixes of $\overline{\bf H}_w$ are discussed below. According to structure of $\overline{\bf H}_C$ and $\overline{\bf H}_w$ in (\ref{spatial-temporal-MIMO_channel}), any $M$-column block matrix is derived from
		$\widetilde{\bf H}_w\in \mathbb{C}^{NN_r\times MN_s}$. When the $KN$-row block matrix of $ \overline{\bf H}_w$ exceeds the dimensions of $\widetilde{\bf H}_w$, the rank of $\overline{\bf H}_C$ is determined by the column block matrix; thus, it is sufficient to analyze the rank of the $KN$-row block matrix when $KN\leq M$ and the rank of the $M$-column block matrix of $\widetilde{\bf H}_w$.
		
		1) When $M\geq KN$, the rank of the composite spatial-temporal channel matrix is limited to $KN$ by the $KN$ rows of the block matrix of $\widetilde{\bf H}_w$. We have $\textrm{rank}\left(\widetilde{\bf H}_w\right)=\min\left(\tilde{d}_t,\tilde{d}_r\right)$. When $KN\leq \min\left(\tilde{d}_t,\tilde{d}_r\right)$, the rank of the channel is less than or equal to $KN$. Since ${\overline{\bf{Q}}}_{r}$ and ${\overline{\bf{Q}}}_{t}$ can be designed to full column rank, it is possible to obtain any $KN$-row block matrix of $\widetilde{\bf H}_w$ that is uncorrelated, thereby allowing the rank of the channel matrix to equal $KN$. When $KN> \min\left(\tilde{d}_t,\tilde{d}_r\right)$, any $KN$ row vectors of $\widetilde{\bf H}_w$ will be correlated. The agile adjustment of phase response can achieve $\min\left(\tilde{d}_t,\tilde{d}_r\right)$ uncorrelated row vectors, resulting in a channel rank of $\min\left(\tilde{d}_t,\tilde{d}_r\right)$. In this case, agilely adjusting phase responses of DARISA MIMO can improve the rank of the composite  spatial-temporal channel such that $\textrm{rank}\left(\overline{\bf H}_C\right)= \min\left(KN,\tilde{d}_t,\tilde{d}_r\right)$.
		
		2) When $M< KN$, the rank of the composite spatial-temporal channel matrix is limited by $M$, with the channel rank determined by the $M$ rows of the block matrix of $\widetilde{\bf H}_w$. This leads to $\textrm{rank}\left(\overline{\bf H}_C\right)= \min\left(M,\tilde{d}_t,\tilde{d}_r\right)$ based on the similar analysis of the above case. In this case, agilely adjusting phase responses cannot improve the rank of the composite spatial-temporal channel.
		
		According to the above analysis, we obtain $\textrm{rank}({\overline{\bf H}}_C)= \min\left(KN,M,\tilde{d}_t,\tilde{d}_r\right)$.
		
			\section{ Proof of Lemma 2} \label{proof_lemma2}
			
			By denoting $s=\Psi\left(\overline{\mathbf{H}}_C\right), a=\textnormal{SNR}$,  the channel capacity (\ref{EDoF-capacity}) is equivalently expressed as
			\begin{align}
				\tag{A-15}	\label{dof-fuction}
				f(s)=\frac{s}{\ln 2} \ln\left(1+\frac{a}{s}\right),s\geq 0.
			\end{align}
			
			\quad We will analyze the monotonicity of the function $f(s)=s \ln \left(1+\frac{a}{s}\right)$ by ignoring the constant $\frac{1}{\ln 2}$. The first order derivative is exprerssed as 
			\begin{align}
				\tag{A-16}	\label{derivative}
				f^{\prime}(s)=\ln \left(1+\frac{a}{s}\right)-\frac{a}{s+a},
			\end{align}
			then we determine the positivity/negativity of the above equation.
			
			 The proof is along the following steps: 1) prove that the first-order derivative is nonnegative at both 0 and $+\infty$; 2) prove that the first-order derivative is monotonically increasing.
			
			Step 1:  When $s \rightarrow 0^{+}$:
			$\ln \left(1+\frac{a}{s}\right) \rightarrow+\infty$, and $\frac{a}{s+a} \rightarrow 1$, so $f^{\prime}(s) \rightarrow+\infty$, the derivative is positive.
			
			 When $s \rightarrow+\infty$:
			Approximate $\ln \left(1+\frac{a}{s}\right) \approx \frac{a}{s}-\frac{a^2}{2 s^2}$ with Taylor expansion and $\frac{a}{s+a} \approx \frac{a}{s}-\frac{a^2}{s^2}$. Substituting $s \rightarrow+\infty$ into (\ref{derivative}), the derivative is given by:
			$$
			f^{\prime}(s) \approx\left(\frac{a}{s}-\frac{a^2}{2 s^2}\right)-\left(\frac{a}{s}-\frac{a^2}{s^2}\right) = \frac{a^2}{2 s^2}>0
			$$
			Therefore, the derivative tends to zero but remains positive.
			
			Step 2: Let $z=\frac{a}{s} \quad(z>0)$ , the derivative in (\ref{derivative}) can be rewritten as:
			\begin{align}
				\notag
				f^{\prime}(s)=\ln (1+z)-\frac{z}{1+z},
			\end{align}
			define the function $$ g(z)=\ln (1+z)-\frac{z}{1+z}, $$ and its derivative is expressed as:
			$$
			g^{\prime}(z)=\frac{1}{1+z}-\frac{1}{(1+z)^2}=\frac{z}{(1+z)^2}>0 \quad(z>0),
			$$
			thus, $g(z)$ is monotonically increasing at $z>0$.  Since $g(0)=0, x \rightarrow+\infty$ and $g(z)>0$ holds for all $z>0$, $f^{\prime}(s)>0$ is valid for all $s>0$.
			
			Therefore, the function (\ref{dof-fuction}) always has a positive derivative at $s>0$, so the function is monotonically increasing, i.e., the channel capacity monotonically increases with EDoF. 
		
		\section{ Proof of Lemma 3} \label{proof_lemma3}
		According to (\ref{final_siso_channel}) and (\ref{spatial-temporal-MIMO_channel}), the channel response of $kn$-th row and $m$-th column element in $\overline{\mathbf{H}}_C\in \mathbb{C}^{KN\times M}$ is expressed as $\overline{\mathbf{H}}_C(kn,m)
		=\frac{1}{\sqrt{\tilde L}}\sum\limits_{i=1}^{N_r}\sum\limits_{j=1}^{N_t}\sum\limits_{\tilde l=1}^{\tilde L} {\widetilde q}_{r,n}^{i}(t_k){\widetilde{\bf a}}_{r,n}^{\tilde{l},iH}{\widetilde{\bf H}}_{a}^{\tilde l} {\widetilde{\bf a}}_{t,m}^{{\tilde l},j}{\widetilde q}_{t,m}^{j}(t_k)$, where
		$\widetilde{\bf a}_{r,n}^{{\tilde l},i}\in \mathbb{C}^{{\tilde d}_r^{\tilde l}\times 1}$ represents the receive response of the $\tilde l$-th arrival cluster at $k$-th agility duration of the $i$-th element of $n$-th receive-DARISA, $\widetilde{\bf a}_{t,m}^{{\tilde l},j}\in \mathbb{C}^{{\tilde d}_t^{\tilde l}\times 1}$ represents the transmit response of the $\tilde l$-th departure cluster at $k$-th agility duration of the $j$-th element of $m$-th transmit-DARISA, $\widetilde{\bf H}_{a}^{\tilde l}\in \mathbb{C}^{d_r^{\tilde l}\times d_t^{\tilde l}}$ represents the scattering response of the $\tilde l$-th cluster, ${\widetilde q}_{r}^{n,i}(t_k)=e^{\jmath\varphi_{r}^{n,i}(t_k)}$ $\left({\widetilde q}_{t}^{m,j}(t_k)=e^{\jmath\varphi_{t}^{m,j}(t_k)}\right)$ is phase response of $i$-th ($j$-th) element of $n$-th ($m$-th) receive  (transmit)  DARISA at $k$-th agility duration.
		
		Since phase responses of all elements can be adjusted continuously within $(0,2\pi]$, i.e., $\varphi_{r}^{n,i}(t_k)\in  (0,2\pi],\varphi_{t}^{m,j}(t_k)\in  (0,2\pi]$, by adjusting ${\widetilde q}_{r,n}^{i}(t_k)$, we always obtain
		\begin{align}
			\notag
			{\widetilde q}_{r,n}^{i}(t_k){\widetilde{\bf a}}_{r,n}^{\tilde{l},iH}(t_k) {\widetilde{\bf H}}_{a}^{\tilde l} {\widetilde{\bf a}}_{t,m}^{{\tilde l},j}(t_k){\widetilde q}_{t,m}^{j}(t_k) = {\widetilde q}_{r,n}^{i}(t_k){\widetilde{\bf a}}_{r,n}^{\tilde{l},iH}{\widetilde{\bf H}}_{a}^{\tilde l} {\widetilde{\bf a}}_{t,m}^{{\tilde l},j}.
		\end{align}
		This result implies that optimizing the phases of both $\overline{\bf{Q}}_{t}$ and $\overline{\bf{Q}}_{r}$ is equivalently to optimizing the phase of $\overline{\bf{Q}}_{r}$ only.
		
		\section{ Proof of Theorem 2} \label{proof of thm2}
		First, we analyze the range of $\zeta^{opt}$, which must satisfy
		\begin{align}
			\tag{A-17}\label{t}
			\zeta^{opt}=\max_{\varphi_{r}^{n,i}(t_k)} {\textrm{Tr}\left(\overline{\mathbf{H}}_C\overline{\mathbf{H}}_C^H\right)}\big{/}{\big{|}\big{|}\overline{\mathbf{H}}_C\overline{\mathbf{H}}_C^H\big{|}\big{|}_F}.
		\end{align}
		When the rank of $\overline{\mathbf{H}}_C$ is $1$, $\zeta^{opt}$ reaches its minimum value of $1$; when $\overline{\mathbf{H}}_C$ is full rank and all eigenvalues are equal, $\zeta^{opt}$ attains maximum value of $\textrm{rank}\left(\overline{\mathbf{H}}_C\right)$. Therefore, the range of $\zeta^{opt}$ is
		\begin{align}
			\tag{A-18}
			1\leq \zeta^{opt} \leq \sqrt{\textrm{rank}\left(\overline{\mathbf{H}}_C\right)}.
		\end{align}
		According to Theorem 1, the rank of ${\overline{\bf H}}_C$ is given by $\textrm{rank}\left({\overline{\bf H}}_C\right)= \min\left(KN,M,\tilde{d}_t,\tilde{d}_r\right)$.
		
		Therefore, we only need to show that there exists a unique zero root for the objective function in (\ref{edof2}) within a specified range $\zeta \in \left[1, \sqrt{\textrm{rank}\left(\overline{\mathbf{H}}_C\right)}\right]$.
		
		When $\zeta=1$, based on the square inequality, we have $\left(\sum_{a=1}^{\textrm{rank}\left(\overline{\mathbf{H}}_C\right)}\lambda_a^2\right)^2\geq \sum_{a=1}^{\textrm{rank}\left(\overline{\mathbf{H}}_C\right)}\lambda_a^4$.
		The equation holds if and only if the rank of $\overline{\mathbf{H}}_C$ is equal to $1$. For $\forall {\bf E}$, the following inequality holds:
		\begin{align}
			\notag
			\textrm{Tr}\left({\bf C}{\bf E}\right)-{\big{|}\big{|}{\bf C}{\bf E}\big{|}\big{|}_F}=\sum_{a=1}^{\textrm{rank}\left(\overline{\mathbf{H}}_C\right)}\lambda_a^2-\sqrt{\sum_{a=1}^{\textrm{rank}\left(\overline{\mathbf{H}}_C\right)}\lambda_a^4}\geq 0.
		\end{align}
		Thus, we obtain
		\begin{align}
			\notag
			F(1)=\max_{{\bf E}}\left\{\textrm{Tr}\left({\bf C}{\bf E}\right)-{\big{|}\big{|}{\bf C}{\bf E}\big{|}\big{|}_F}\right\}\geq 0.
		\end{align}
		
		When $\zeta=\sqrt{\textrm{rank}\left(\overline{\mathbf{H}}_C\right)}$, since  $\left(\sum_{a=1}^{\textrm{rank}\left(\overline{\mathbf{H}}_C\right)}\lambda_a^2\right)^2\leq \textrm{rank}\left(\overline{\mathbf{H}}_C\right) \left( \sum_{a=1}^{\textrm{rank}\left(\overline{\mathbf{H}}_C\right)}\lambda_a^4\right)$,
		the equation holds if and only if all eigenvalues of $\overline{\mathbf{H}}_C$ are equal. For any $\forall {\bf E}$, the following inequality holds:
		\begin{align}
			\notag
			&\textrm{Tr}\left({\bf C}{\bf E}\right)-\sqrt{\textrm{rank}\left(\overline{\mathbf{H}}_C\right)}{\big{|}\big{|}{\bf C}{\bf E}\big{|}\big{|}_F}\\
			\notag
			&=\sum_{a=1}^{\textrm{rank}\left(\overline{\mathbf{H}}_C\right)}\lambda_a^2-\sqrt{\textrm{rank}\left(\overline{\mathbf{H}}_C\right)\sum_{a=1}^{\textrm{rank}\left(\overline{\mathbf{H}}_C\right)}\lambda_a^4}\leq 0.
		\end{align}
		
		Therefore, denoting $\zeta^{u}\triangleq\sqrt{\textrm{rank}\left(\overline{\mathbf{H}}_C\right)}$, we obtain
		\begin{align}
			\notag
			&F\left(\zeta^{u}\right)=\max_{{\bf E}}\left\{\textrm{Tr}\left({\bf C}{\bf E}\right)-\zeta^{u}{\big{|}\big{|}{\bf C}{\bf E}\big{|}\big{|}_F}\right\}\leq 0;
		\end{align}
		if and only if all eigenvalues of ${\overline{\bf H}}_C$ are equal, $F\left(\zeta^u\right)=0$.
		
		Based on the Lemma 4, the optimal objective function of the problem (\ref{edof2}) is strictly monotonically decreasing with respect to $\zeta$.
		For $\zeta\in \left[1,\sqrt{\textrm{rank}\left(\overline{\mathbf{H}}_C\right)}\right]$, we have $F(1)\geq 0$ and $F\left(\sqrt{\textrm{rank}\left(\overline{\mathbf{H}}_C\right)}\right)\leq 0$. If $\textrm{rank}\left(\overline{\mathbf{H}}_C\right)=1$, $F(1)=F\left(\sqrt{\textrm{rank}\left(\overline{\mathbf{H}}_C\right)}\right)=0$; If $\textrm{rank}\left(\overline{\mathbf{H}}_C\right)>1$ and all eigenvalue values of ${\overline{\bf H}}_C$ are equal, $F(1)> 0$ and $F\left(\sqrt{\textrm{rank}\left(\overline{\mathbf{H}}_C\right)}\right)=0$; Otherwise, $F(1)> 0$ and $F\left(\sqrt{\textrm{rank}\left(\overline{\mathbf{H}}_C\right)}\right)<0$.
		Therefore, there must exist a unique $\zeta^{opt}$ such that the optimal objective function of problem (\ref{edof2}) is equal to $0$. }

\end{document}